\newtheorem{proposition}{Proposition}
\begin{document}

\title{Flexible Bit-Truncation Memory for Approximate Applications on the Edge}

\author{William~Oswald, Mario~Renteria-Pinon,~\IEEEmembership{Member,~IEEE,} Md.~Sajjad~Hossain, Kyle~Mooney,~\IEEEmembership{Graduate Student~Member,~IEEE}, Md.~Bipul~Hossain, Destinie~Diggs, Yiwen~Xu, Mohamed~Shaban,~\IEEEmembership{Senior Member,~IEEE,} Jinhui~Wang,~\IEEEmembership{Senior Member,~IEEE,} and~Na~Gong,~\IEEEmembership{Senior Member,~IEEE.}

\thanks{This work has been submitted to the IEEE for possible publication.
Copyright may be transferred without notice, after which this version may
no longer be accessible.}

\thanks{Manuscript received October 15, 2025. W. Oswald and M. Renteria-Pinon contributed equally to this paper. (Corresponding author: N. Gong.)}

\thanks{W. Oswald, M. B. Hossain and M. Shaban are with the Department of Electrical and Computer Engineering, University of South Alabama, AL 36688 USA (e-mail: (\{wdo1621, mh2238 \}@jagmail.southalabama.edu, \{mshaban\}@southalabama.edu)).}

\thanks{M. Renteria-Pinon is with the Department of Electrical and Computer Engineering, New Mexico State University, NM 88003 USA (e-mail: marior3@nmsu.edu).}

\thanks{M. S. Hossain, K. Mooney, D. Diggs, J. Wang, and N. Gong are with the Department of Electrical and Computer Engineering, University of Alabama, AL 35487 USA (e-mail: (\{mhossain80, kamooney3, dadiggs1\}@crimson.ua.edu, \{jwang231, ngong\}@ua.edu)).}

}



\maketitle

\begin{abstract} Bit truncation has demonstrated great potential to enable run-time quality-power adaptive data storage, thereby optimizing the power/energy efficiency of approximate applications and supporting their deployment in edge environments. However, existing bit-truncation memories require custom designs for a specific application. In this paper, we present a novel bit-truncation memory with full adaptation flexibility, which can truncate any number of data bits at run time to meet different quality and power trade-off requirements for various approximate applications. The developed bit-truncation memory has been applied to two representative data-intensive approximate applications: video processing and deep learning. Our experiments show that the proposed memory can support three different video applications (including luminance-aware, content-aware, and region-of-interest-aware) with enhanced power efficiency (up to 47.02\% power savings) as compared to state-of-the-art. In addition, the proposed memory achieves significant (up to 51.69\%) power savings for both baseline and pruned lightweight deep learning models, respectively, with a low implementation cost (2.89\% silicon area overhead).
\end{abstract}

\begin{IEEEkeywords}
approximate memory, bit truncation, quality-adaptive, edge, videos, deep learning.
\end{IEEEkeywords}

\section{Introduction}
Today, the demand for designing efficient computing systems is increasing due to the diminishing benefits of semiconductor technology scaling and the growing requirements of data-intensive applications, such as video processing and deep neural networks (DNNs), on resource-constrained edge devices \cite{roy2015approximate}. Fortunately, many applications have approximate, non-deterministic specifications with multiple acceptable quality levels for different conditions, providing exciting quality-adaptive design opportunities for those approximate applications \cite{alioto2018energy}. Quality-adaptive designs dynamically adjust output quality to optimize power efficiency while satisfying the quality requirements, thereby enabling the deployment of data-intensive applications on edge devices such as video steaming \cite{haidous2022content, cao2018video} or Edge AI \cite{alioto2018energy, young2025low}.

To enable quality-adaptive computing systems, bit truncation is one of the most widely-applied techniques for different approximate applications. For example, in video systems, i.e., luminance-aware \cite{chen2015vcas, edstrom2016luminance,chen2018viewer}, content-aware \cite{edstrom2019content}, and ROI-aware \cite{ROI-AwareVideoStorage}, one custom bit-truncation memory was developed to enable the specific number of truncated bits, such as three or four bits truncation in \cite{chen2015vcas, edstrom2016luminance,chen2018viewer}, zero to four bits truncation in \cite{edstrom2019content}, or zero or three bits in \cite{ROI-AwareVideoStorage}. Therefore, existing bit-truncation memories cannot provide flexibility to support different video applications. In addition, state-of-the-art bit-truncation memory designs, which have been developed for a specific video application, cannot be used for other approximate applications such as DNNs. 

In this paper, we propose a truly flexible bit-truncation memory, coined as TrunMem. Compared to existing work, TrunMem can enable run-time quality adaptation to meet the requirements of different applications. For example, with full flexibility, TrunMem can be used to support all three video systems (i.e., luminance-aware\cite{chen2015vcas, edstrom2016luminance,chen2018viewer}, content-aware \cite{edstrom2019content}, and ROI-aware video storage \cite{ROI-AwareVideoStorage}). Earlier in \cite{oswald2024flexible}, we presented a basic TrunMem design, including some preliminary results. We extend our original work with the following key contributions:

\begin{itemize}

   \item \textbf {Complete design of TrunMEM with control unit}: From a hardware design perspective, a complete circuit diagram of TrunMEM is presented, and a detailed operation process with all control units for truncation, byte mode, and word mode is provided to support different approximate applications (Section III.A).
  \item \textbf {Full-chip implementation}: A full-chip design of TrunMEM is presented and discussed; based on it, the silicon area cost is evaluated and analyzed (Section IV.B).
  \item \textbf {Thorough post-layout evaluation}: Instead of schematics-based pre-simulations in \cite{oswald2024flexible}, we incorporated the extracted parasitic parameters and performed a comprehensive suite of post-layout simulations on TrunMEM to evaluate speed, performance overhead, and power efficiency(Section IV.B). 
  \item \textbf {Software-hardware co-design framework for edge intelligence}: Based on TrunMEM, we present a software-hardware co-design framework to accelerate AI deployment in dynamic edge environments. Specifically, the software-level model compression techniques are used to train lightweight models; based on those optimized models, TrunMEM can further enable run-time adaption in the model inference process to meet the dynamic requirements of the edge applications (Section II.B).
  \item \textbf {Thorough DNN evaluation}: Two widely-used deep learning applications, including classification and object detection, are used to validate the TrunMEM-based software-hardware co-design framework. For each deep learning application, We applied TrunMEM to both baseline and pruned lightweight models with different model architectures and datasets. It can be concluded that integrating software-level compression techniques into the model training process and TrunMEM-based bit truncation into the inference process can enable optimized AI deployment on edge devices (Section IV.D).
  \item \textbf {Mathematical models for the optimal truncation value in DNNs}: Novel mathematical models are developed to identify the dummy value setting for the truncated bits for deep learning such that the expected mean-squared error of weights is minimized (Section III.A).
  \item \textbf {Comprehensive video analysis}: The video analysis in \cite{oswald2024flexible} was based on a single video. In order to evaluate the effectiveness of TrunMEM in different video applications, 4,525 videos with a variety of video characteristics, are used for luminance-aware, content-aware, and ROI-aware video storage in this paper. Also, multiple widely-used video metrics such as peak signal-to-noise ratio (PSNR), structural similarity (SSIM), and visual quality, as well as statistical analysis are included in the analysis (Section IV.C).
  \item \textbf {Open Source}: The code for all developed emulators, including those for videos and DNN applications, is open-source on GitHub."
 \end{itemize} 

 It should be emphasized that although this paper applies and tests the proposed TrunMem for video and DNN systems, which are two representative approximate data-intensive applications, TrunMEM is also applicable for different approximate applications such as audio processing, wireless applications, and Recognition, Mining and Synthesis (RMS) \cite{roy2015approximate}. In addition, this paper focuses on SRAM, the mainstream embedded memory technology; however, other memory technologies, such as DRAM or emerging memories \cite{gong2025ai}, can also benefit significantly from the TrunMem architectures, enabling run-time quality-power adaptation.
 
 The rest of the paper is organized as follows. Section II presents background and the state-of-the-art. The proposed TrunMem is detailed in Section III. The evaluation results are provided in Section IV. Finally, the conclusion is drawn in Section V.

\section{Background and State-Of-the Art}
\subsection{Bit Truncation and Approximate Video Storage} 

Compared to other techniques used for quality-adaptation, e.g., dynamic voltage scaling, bit truncation has two major advantages: (i) it can enable more energy savings \cite{frustaci2016approximate}. With bit truncation, researchers developed several viewer-aware video memories to optimize power savings, as shown in Fig. \ref{fig:video}. For example, studies applied bit truncation to luminance-aware video memory design and revealed that more least-significant-bits (LSBs) of pixel data can be truncated if the video device is operating under high lighting conditions \cite{chen2015vcas, edstrom2016luminance,chen2018viewer}. Specifically, Edstrom et al. have shown three and four LSBs per-byte can be truncated in video streams exposed to overcast and sunlight. The power-quality trade-off considers the impact of video content and viewer experience \cite{edstrom2019content}. The circuit truncates the number of LSBs between zero-bit to 4-bit according to the micro-block characteristics of videos (i.e., the average plain macroblock percentage of a video). Recently, Haidous et al. \cite{ROI-AwareVideoStorage} presented a Region-of-Interest (ROI)-aware video memory. The system truncates three LSBs for non-ROI regions of videos to further optimize video quality-power trade-off. These works use bit-truncation memory failures in high noise-tolerance viewing contexts, by adaptively disabling the LSBs of the video data stored in memories. Unfortunately, for each of the above bit-truncation techniques, custom bit-truncation memory has been developed to meet the need of the truncated bits and adaptation. In this paper, a new memory - TrunMEM is presented to support a general hardware platform for bit truncation and the detailed analysis is presented in Section IV. 

\begin{figure}
\centering
  \includegraphics[width=\linewidth]{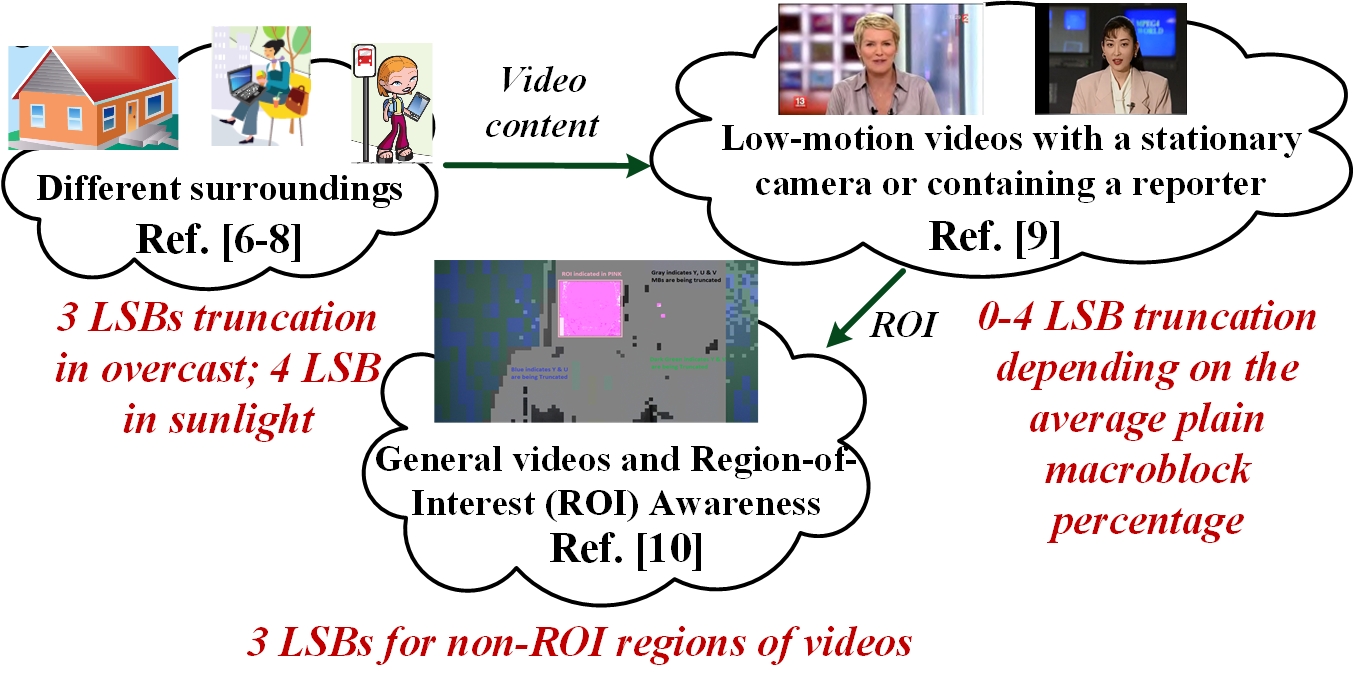}
  \vspace{-0.2in}
  \caption{Bit truncation for viewer-aware video storage.}
  \label{fig:video}
\end{figure}


\subsection{Bit Truncation and Approximate Deep Learning}

  In addition to video applications, TrunMEM is also applicable to DNN applications to enable software-hardware co-design for edge intelligence. 
  
  As the application of DNNs become more widespread on edge devices, the necessity for DNN power optimization has become essential. Recently, several software-based model compression approaches have been developed to design lightweight AI models for edge devices such as quantization, knowledge distillation, and pruning \cite{model_compression_forDNN_survey, computational_complexity_rediction}. These software approaches can substantially reduce the size and computational cost of the model. 
  
  Quantization reduces the numerical precision of model weights and activations, thereby lowering memory requirements and enabling deployment on devices with limited resources \cite{gholami2022survey}. In some situations, the compression rate attained by the quantization technique can be limited. For example, when 32-bit floating-point weights are transformed to 8-bit integer values, a maximum compression ratio of 4× is achieved, which may not be efficient for large models such as VGGNet \cite{VGG-16_original_paper}. Knowledge distillation aims to train a low-complexity student model using a complex teacher model. This approach does not exploit the redundancies within the model structure as efficiently as pruning \cite{model_compression_forDNN_survey}. 
  
  Pruning is another effective approach for model compression and optimization, as it removes unimportant or redundant parameters or computational units in a structured and an unstructured fashion \cite{liang2021pruning}. Unstructured pruning eliminates random insignificant weights, making it challenging to accelerate on general-purpose hardware \cite{he2023structured}. To address these challenges, specialized sparse matrix libraries and hardware accelerators have been developed. However, this represents an additional complexity to the model realization. Alternatively, structured pruning, such as filter or layer pruning, can entirely remove unimportant structures from the model.

  \begin{figure}
\centering
    \includegraphics[width=0.9\linewidth]{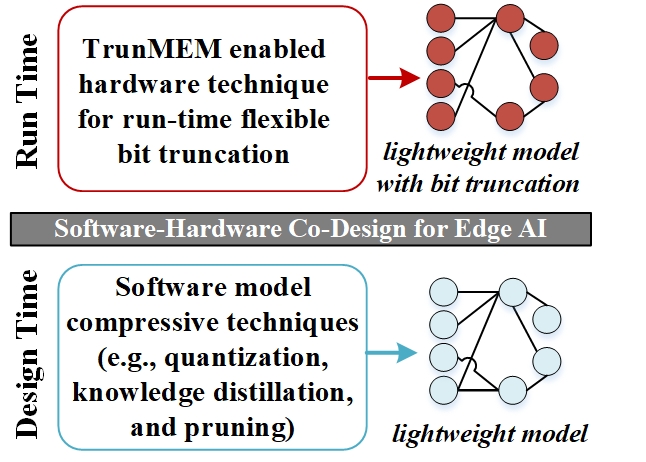}
  \caption{TrunMEM enabled software-hardware co-design for edge intelligence.}
  \label{fig:edge}
  \vspace{-0.2in}
\end{figure}

Unfortunately, all of those software techniques target model development and optimization during the design time. Once an optimized lightweight model is developed, the model cannot adapt to dynamic environments at runtime, limiting the model deployment process on edge devices. However, the adaptation of model performance is crucial for the deployment of edge models. For example, for a security camera, the classification performance requirement differs significantly depending on its deployment environment. In general, high classification performance is required for high-security environments with strict regulations such as airports, while medium accuracy may be acceptable for a lower-risk environment (e.g., an office) with further enhanced power efficiency. 

In order to adapt to the dynamic edge deployment environments, the bit truncation memory developed in this paper, i.e., TrunMEM, can be used simultaneously with the software techniques, as illustrated in Fig. \ref{fig:edge}. In our analysis, we considered three types of deep learning models for verifying bit truncation, including popular pre-trained CNN models (i.e., VGG-16, ResNet-56) for classification, an object detection model (i.e., Faster-RCNN), and custom pruned CNN models optimized for edge devices. For each CNN model, we used the channel attention-based filter pruning approach  in order to reduce the complexity of classification models. The chosen attention-based filter pruning approach was inspired by the Squeeze-and-Excitation channel attention method \cite{hu2018squeeze}, which showed elevated performance as compared to other filter pruning methods in \cite{liu2021channel,hu2022neural}. The developed lightweight models, together with their basic models, are deployed together with TrunMEM to enable performance-power adaptation in the edge deployment process, which will be discussed in detail in Section IV.

\section{Adaptive Bit Truncation}
\subsection{Bit Truncation and Optimal Value}

In order to enable low-power data storage, bit truncation needs to adapt the number of truncated bits to enable setting optimal truncation values. In terms of video data, which are represented by 8-bit integer pixel values, it has been concluded from previous work that setting the truncated LSBs to the mean value, i.e., 10...0 in binary will minimize the expected mean square error (MSE) \cite{edstrom2019content}. Here, to apply bit truncation to DNN weight storage, we study the truncated values for floating point numbers. For this study, IEEE 754 single precision floating point representation was used in our analysis, which has been used widely in deep learning systems \cite{edstrom2017data, IEEE_FloatingPointStandard}. To find an optimal value necessary to fill truncated values, we propose an approach that identifies the minimal error between all possible values. Consider the following:



Let $b_i\in \{0,1\}$ be the value of the $i^{th}$ bit, $i=0, 1, \cdots, 31$, where $b_{31}$ is the sign bit, $b_{23}, \cdots, b_{30}$ are the exponent bits, and $b_0, \cdots, b_{22}$ are the fraction bits. Thus, a number $y$ using the IEEE 754 standard can be represented as follows.
\begin{align} \label{IEEE754_Exp0}
    y & = (-1)^{b_{31}}2^{\sum_{i=0}^7 b_{23+i}2^i-127}\left(1+\sum_{i=0}^{22}b_{22-i}2^{-i-1}\right)
\end{align}
\begin{proposition}
    Let $T\subseteq \{0,1,\cdots, 22\}:=F$ be the index set of the truncation bits in which we only truncate the fraction bits. Assuming that the true value of $b_i, i=0,1,\cdots,31$ are evenly distributed, the best dummy value to set these truncated bits for minimizing the $E(MSE)$ is
    \begin{align}
        b_{t_{max}} = 1, b_j = 0, \forall j \in T - \{t_{max}\},
    \end{align}
    where $t_{max}$ is the maximum element in $T$.
\end{proposition}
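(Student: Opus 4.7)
The plan is to decompose the expected squared error according to the IEEE-754 structure in \eqref{IEEE754_Exp0}, isolate the only part that depends on the dummy assignment, and then finish with a short subset-sum argument.

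First I would form $\hat y$ by replacing $b_j$ with $\hat b_j$ for $j\in T$ only in \eqref{IEEE754_Exp0}. In $(y-\hat y)^2$ the sign bit squares out, and because the exponent bits $b_{23},\dots,b_{30}$ are independent of the fraction bits, the expectation factors as
\begin{align*}
E\bigl[(y-\hat y)^2\bigr] \;=\; E\bigl[2^{2(\mathcal{E}-127)}\bigr]\,E\bigl[\Delta_m^{\,2}\bigr],\qquad \Delta_m \;=\; \sum_{j\in T}(b_j-\hat b_j)\,2^{j-23},
\end{align*}
with $\mathcal{E}=\sum_{i=0}^{7}b_{23+i}\,2^i$. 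The first factor is independent of $\hat b$, so I only have to minimize $E[\Delta_m^{\,2}]$.

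Next I would use the bias/variance identity. Because $\{b_j\}_{j\in T}$ are independent Bernoulli$(1/2)$ and the $\hat b_j$ are constants,
\begin{align*}
E\bigl[\Delta_m^{\,2}\bigr] \;=\; \tfrac{1}{4}\sum_{j\in T}2^{2(j-23)} \;+\; \Bigl(\sum_{j\in T}(\tfrac12-\hat b_j)\,2^{j-23}\Bigr)^{\!2}.
\end{align*}
The first (variance) term does not depend on $\hat b$, so the problem collapses to choosing $\hat b\in\{0,1\}^T$ so that the subset sum $X(\hat b)=\sum_{j\in T}\hat b_j\,2^{j-23}$ is as close as possible to the midpoint $S=\tfrac12\sum_{j\in T}2^{j-23}$.

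Finally I would dispatch the subset-sum step by case analysis on $\hat b_{t_{\max}}$. The binary dominance inequality $2^{t_{\max}-23}>\sum_{j\in T,\,j<t_{\max}}2^{j-23}$ forces $\tfrac12\cdot 2^{t_{\max}-23}<S<2^{t_{\max}-23}$, so every $X(\hat b)$ with $\hat b_{t_{\max}}=0$ lies strictly below $S$ and every $X(\hat b)$ with $\hat b_{t_{\max}}=1$ lies strictly above $S$; the minimum attainable gap in either regime is easily seen to equal $\tfrac12\bigl(2^{t_{\max}-23}-\sum_{j\in T\setminus\{t_{\max}\}}2^{j-23}\bigr)$, and the assignment in the proposition attains it by taking only the $t_{\max}$ bit. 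I expect the subset-sum case analysis to be the only delicate part, since it relies on the binary dominance property special to powers of two; the expectation factorization and the bias/variance split are routine.
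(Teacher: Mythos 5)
Your proposal is correct, and it follows the same overall reduction as the paper: strip away the sign/exponent/untouched-fraction contributions, observe that the dummy assignment only enters through the truncated-bit term, and reduce to a one-dimensional least-squares problem whose continuous optimum is the midpoint $S=\tfrac12\sum_{k\in T}2^{k-23}$. The paper gets there by enumerating the $m=2^{|T|}$ equally likely combinations and differentiating $f(x)=\tfrac1m\sum_j(x-y_j)^2$; your bias--variance identity is an equivalent but tidier route. Where you genuinely add value is the last step. The paper concludes by asserting that the continuous optimum "is equivalent to" setting $b_{t_{\max}}=1$ and the rest to $0$, but these two quantities are not equal (e.g., for $T=\{0,2,5\}$ the midpoint is $2^{-19}+2^{-22}+2^{-24}$, not $2^{-18}$); what is actually needed is a proof that, among the $2^{|T|}$ \emph{achievable} binary assignments, the proposition's choice minimizes the distance to $S$. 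Your subset-sum case analysis via binary dominance supplies exactly that missing argument, so your write-up is the more rigorous of the two. One remark your own computation exposes: the minimum gap $\tfrac12\bigl(2^{t_{\max}-23}-\sum_{j\in T\setminus\{t_{\max}\}}2^{j-23}\bigr)$ is attained in \emph{both} regimes --- by $\hat b_{t_{\max}}=1$ with all other truncated bits $0$, and equally by $\hat b_{t_{\max}}=0$ with all other truncated bits $1$ --- so the stated assignment is \emph{an} optimizer rather than \emph{the} optimizer; it would be worth saying so explicitly (and noting the degenerate case $|T|=1$, where the two candidates are equidistant from $S$ as well).
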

\begin{proof}
    Let random variable $Y$ be the true decimal value of the number expressed by the IEEE 754 standard. By \eqref{IEEE754_Exp0}, we can rewrite $Y$ as
    \begin{align}
        Y &= (-1)^{b_{31}}2^{\sum_{i=0}^7 b_{23+i}2^i-127}\left(1+\sum_{k=0}^{22}b_k2^{k-23}\right) \quad (k := 22-i) \\
        & = (-1)^{b_{31}}2^{\sum_{i=0}^7 b_{23+i}2^i-127}\bigg(1+\sum_{k\in F-T} b_{k}2^{k-23} \notag\\
        & \quad \quad +\sum_{k\in T} B_{k}2^{k-23}\bigg) \notag \\
        &= c_1+c_1c_2+c_1\sum_{k\in T} B_{k}2^{k-23} \label{IEEE754_Trunc1}
    \end{align}
    where $B_{k}$ indicates the true binary value of the corresponding $b_{k} (i\in T)$, and $c_1:=(-1)^{b_{31}}2^{\sum_{i=0}^7 b_{23+i}2^i-127}$ and $c_2:=\sum_{k\in F-T} b_{k}2^{k-23}$ are both constant due to the definition of the truncation index set $T$. Let $m:=2^{|T|}$ be the number of possible combinations of the truncated bits and $x$ be the target (decimal) dummy value of $Y$. To minimize the $E(MSE)$, we are to minimize
    \begin{align*}
        f(x) = \frac{1}{m}\left[(x-y_1)^2+\cdots+(x-y_m)^2\right], 
    \end{align*}
    where $y_1, \cdots, y_m$ represents the (decimal) value of the number under each dummy value setting of the truncated bits. Thus, we get
    \begin{align*}
        0 = f'(x) = \frac{2}{m}\left(mx-\sum_{j=1}^m y_j\right)
    \end{align*}
    and hence
    \begin{align}
        x = \frac{1}{m}\sum_{j=1}^m y_j
    \end{align}
    Continuing the notation of \eqref{IEEE754_Trunc1}, denote
    \begin{align}
        y_j = c_1+c_1c_2+c_1\sum_{k\in T} b_{k,j}2^{k-23}
    \end{align}
    We have
    \begin{align}
        x &= \frac{1}{m}\sum_{j=1}^m y_j =  c_1 + c_1c_2 + \frac{c_1}{m}\sum_{j=1}^m\sum_{k\in T} b_{k,j}2^{k-23} \\
        &=  c_1 + c_1c_2 + c_1\sum_{k\in T} 2^{k-23} \left(\frac{1}{m}\sum_{j=1}^m b_{k,j}\right) \\
        &=  c_1 + c_1c_2 + \frac{c_1}{2}\sum_{k\in T} 2^{k-23} \label{IEEE754_Trunc_x2}
    \end{align}
    in which \eqref{IEEE754_Trunc_x2} holds because in $b_{k,j}, j=1, \cdots, m, \forall k$, there must be $\frac{m}{2}$ 1's and $\frac{m}{2}$ 0's due to the true-value evenly distribution assumption. Finally, comparing \eqref{IEEE754_Trunc_x2} with \eqref{IEEE754_Trunc1}, the best dummy value of the truncated fraction bits should be set to half of the value where $b_k=1, \forall k\in T$, and this is equivalent to setting the dummy value of the highest-level truncated bit to be one and all other truncated bits to be zero.
\end{proof}

For a simple example, if the indices of the truncated bits are $T=\{0,2,5\}$, then the best dummy value setting for these truncated bits in terms of minimizing the expected MSE is $b_5=1, b_2=b_0=0$.


\subsection{Proposed TrunMEM}

\begin{figure*}[ht]
	\begin{center}
    \includegraphics[width=\linewidth]{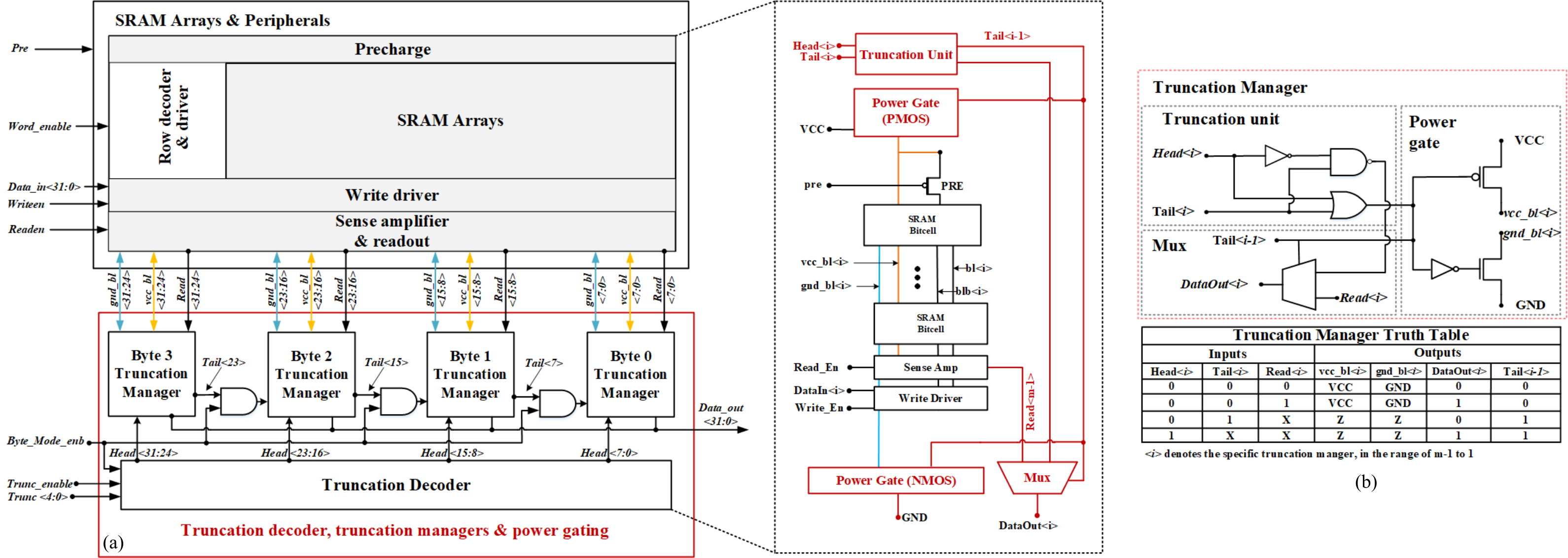}
	\end{center}
    \vspace{-0.1in}
	\caption{Proposed TrunMEM: (a) Memory structure and (b) Truncation manager circuitry and truth table. The components found within the Precharge, SRAM arrays, Row decoder \& driver, write driver, and sense amplifier are typical components found within common SRAM. The Truncation Managers are responsible for controlling the state of the data type (32bit float, or 8bit integer).}
    \label{proposed}
    \vspace{-0.1in}
 \end{figure*}



Fig. \ref{proposed} illustrates the architecture of the proposed TrunMem. As shown in Fig. \ref{proposed} (a), the SRAM array consists of $N$ words by $m=32$ bits, with each bitcell designed as a 6T SRAM. To achieve the primary objective of creating an adaptive and flexible truncation memory suitable for multiple applications, TrunMem introduces a truncation manager array. Truncation in each column of the memory array is facilitated by interconnected individual managers. 

Each proposed truncation manager employs CMOS power-gating circuitry to minimize the power consumption of the truncated bits. Power gating is applied to VCC and ground by connecting to virtual power rails or by leaving these virtual connections at a high impedance. Accordingly, each column of SRAM bitcells has its own dedicated pair of virtual rails, such that a single power gate can disconnect an entire column of SRAM bitcells at once, thus reducing power consumption. 

Peripheral circuitry operating on a specific column of bits will also be power-gated along with the SRAM bitcells for additional power savings. However, any SRAM bitcell disconnected from the power and ground rails quickly loses the value stored inside and cannot be read from. As such, the truncation manager circuit also populates read values when bit-lines get disconnected, which can either be a \textit{1} or \textit{0}.

Specifically, as shown in Fig. \ref{proposed} (b), each truncation manager consists of a truncation unit, power gating transistors, and an output multiplexer (Mux). The truncation manager has three operating states: normal operation without truncation, MSB of truncated bits (MSB truncated), and non-MSB of truncated bits (lesser bit truncated). For every column of bitcells, the signals to control the truncation state are the two inputs: $Head$ and $Tail$. $Head$ is used to detect whether it is MSB truncated and $Tail$ indicates whether it is lesser bit truncated. The operation process is as follows: If both the $Head$ or $Tail$ signal are disabled for a specific bit (e.g., \textit{ith} bit) is active, i.e., \textit{$Head<i> = Tail<i> = 0$}, the \textit{ith} bit will be in the normal operational state, and the $DataOut<i>$ will be the expected memory readout value from typical SRAM (i.e., $Read<i>$). During the normal state, the virtual rails ($vcc\_bl<i>$ and $gnd\_bl<i>$) remain connected to the supply voltage ($VCC$) and ground ($GND$), respectively. When $Head$ of the \textit{ith} bit is \textit{1}, i.e., \textit{$Head<i>=1$}, which indicates MSB truncated, $DataOut<i>$ will be \textit{1} and $vcc\_bl<i>$ and $gnd\_bl<i>$ will be placed into high impedance for power savings. Similarly, at the lesser bit truncated state, when $Tail<i> = 1$, virtual power rails enter high impedance and a $dataOut = 0$ value will be generated. 

As shown Fig. \ref{proposed}, in either truncation states, the output of the \textit{ith} bit, i.e., $Tail<i>$ will be applied to next bit as input, i.e., $Tail<i-1>$, and therefore the truncation managers of different bits work in series. As a result, $Head$ is the only external control signal required to be managed from external circuitry, while $Tail$ remains an internal signal. This series connection enables control such that if $Head<i> = 1$, the truncation manager signals to all lesser significant truncation managers to truncate, via the $Tail$ signals. Accordingly, the truncated memory returns an optimal truncation value with the most significant truncated bit as \textit{1} and other bits as zeros (i.e., 10...0\textsubscript{2}). The $Tail<m-1>$ is directly connected to $GND$, and the $Tail<0>$ signal is left floating in this design, as they serve no purpose.

A byte mode is introduced for flexibility in video applications. As shown in Fig. \ref{proposed}(a), dividing the truncation manager array into bytes and using an AND gate with the active LOW signal $Byte\_mode\_enb$ resets the $Tail$ value at each byte’s MSB, ensuring correct SRAM column truncation.

It can also be observed from Fig. \ref{proposed} (b) that the proposed truncation manager circuit is implemented with several simple logic gates, which does not induce a large area overhead. However, the sizing of power gating transistors can possibly be a point of concern, depending on the number SRAM word-lines they need to power. As a result, the number of truncation managers is a trade-off between flexibility and implementation cost. Full bit-adaptation is achieved in the design of Fig. \ref{proposed}, which adds $m$ truncation managers to the memory to support any value of truncation from 0 to $m-1$ bits. A flexible bit truncation provides the best flexibility to support different applications. However, to reduce the power-gating area overhead, if the target applications are known, the number of the truncation manager circuits may be reduced accordingly. Furthermore, a careful layout design of power gating transistors can reduce the implementation overhead, which will be discussed in Section IV.B.

\section{Experimental results}

\subsection{Experimental Methodology}\label{Experimental Methodology}

\textit{Hardware-Level Implementation and Verification:} To evaluate the effectiveness of the proposed memory, an SRAM with 1024 words$\times$32 bits was implemented using a \text{130nm} CMOS technology from SkyWater\cite{sky130}. EDA tools provided by SkyWater such as NGSPICE, Magic, and XSchem are utilized for physical design, implementations, and verifications. Based on the extracted parasitic parameters, we performed a comprehensive suite of post-layout simulations to evaluate functionality and performance parameters including timing diagram, power consumption, and silicon area overhead. 

\textit{Application-Level Evaluation: Videos.} We evaluated the video quality of TrunMEM using a custom Python-based emulator built in-house, following the same methodology as previous work \cite{ROI-AwareVideoStorage, edstrom2016luminance, edstrom2019content}. Specifically, 4,525 various videos from dataset \cite{HMDB_Dataset} in the YUV file format were selected as video inputs. Specifically, the emulator can read the pixel data of each video, perform analysis such as plain MB percentage or ROI, and then truncate the desired number of LSBs, according to a specific video technique \cite{ROI-AwareVideoStorage, edstrom2016luminance, edstrom2019content}. The generated video data is saved for display as video output. In our analysis, two widely-used metrics for video quality assessment, Peak Signal-to-Noise Ratio (PSNR) and Structural Similarity Index Measure (SSIM). These video quality measurements are paired with the power savings measurements produced by the hardware simulator to measure the trade-offs between quality and power savings. Our code of the video emulator is available at \cite{Git_Video_Emulator}.   

\textit{Application-Level Evaluation: DNNs.} 
To validate the effectiveness of TrunMEM, two widely used deep learning applications (i.e., classification and object detection) have been considered. To evaluate the performance of the models with TrunMEM, we also developed an in-house Python-based hardware emulator to truncate the weight and bias values for DNN inference. Our code of the DNNs emulator is available at \cite{Git_DNN_Emulator}. 

In addition to baseline models, as discussed in Section II, we also used the channel attention-based filter pruning approach to show the TrunMEM-enabled software-hardware co-design methodology. In the following subsections, we will discuss the methods used for each of the classification and object detection tasks.

\textbf{(1) Classification Task.} To evaluate the integration of TrunMEM and channel attention-based filter pruning in the context of image classification tasks, we selected two widely used deep neural network models (i.e., VGG-16 and ResNet-56). The models were chosen for their proven performance and widespread adoption in the computer vision community\cite{ zhang2015accelerating,ResNet50_Paper}. Two datasets, including CIFAR-10 and CIFAR-100, were used \cite{krizhevsky2009learning}. The CIFAR-10 dataset is considered a benchmark dataset in several deep learning applications. It consists of 60,000 32×32 true-color images categorized into 10 classes, ranging from airplanes and automobiles to birds, cats, and dogs [15]. In addition, the CIFAR-100 dataset consists of 100 classes, each with 600 images \cite{krizhevsky2009learning}. For both datasets, we have used 40,000, 10,000 and 10,000 images for training, validating and testing models, respectively. 

We then assessed the performance of the two models (i.e., VGG-16 and ResNet-56) on the two datasets (i.e., CIFAR-10 and CIFAR-100) prior to filter pruning and bit truncation and reported the performance of the models (i.e., considered as the baseline or benchmark performance). For the VGG-16 model trained on both CIFAR-10 and CIFAR-100 datasets, as well as ResNet-56 applied on CIFAR-10, we opted to use the stochastic gradient descent (SGD) optimizer with the categorical cross entropy loss function. However, when classifying CIFAR-100 using ResNet-56, we observed that the SGD does not support an optimal convergence of the model. Hence, we used the Adam optimizer along with the categorical cross entropy loss function to enhance the model performance.

The hyper parameters of the baseline models were chosen using the grid search algorithm \cite{dufour2019finite}. Using this algorithm, we exploited a wide range of learning rates from 0.000001 to 0.1, a large range of batch sizes from 16 to 256, and a limited range of training epochs from 50 to 250. As a result, we identified a batch size of 32 images for VGG-16 and ResNet-56 model training. Further, for training the VGG-16 model on both CIFAR-10 and CIFAR-100, we used an initial learning rate of 0.001 and a maximum of 150 epochs for model training. To enhance model convergence, we implemented a learning rate reduction approach reducing the initial learning rate to one-half of its value at the 50th and 100th training epochs. For the ResNet-56 model, we have trained the model on the CIFAR-10 dataset for 165 epochs with an initial learning rate of 0.1 and used the Piecewise Constant Decay scheduler. Subsequently, we followed a similar strategy for training the ResNet-56 on the CIFAR-100 dataset, utilizing a maximum of 200 epochs for model training and an initial learning rate of 0.001. To further improve model convergence, we employed a learning rate reduction scheme where the learning rate was decreased by a factor of 10 at the 80th, 120th, and 160th epochs. 

In addition to training the aforementioned baseline models above, we have applied channel attention-based filter pruning approach on the baseline models to generate lightweight models with a reduced number of learning parameters. The attention module was then attached to each convolutional layer in the baseline models and further retrained using the same previously mentioned training settings but for a different number of training epochs. For instance, we used 100 training epochs for retraining lightweight ResNet-56 on CIFAR-100 to avoid overfitting. We then evaluated the significance of the filters/feature maps in each layer of the models by measuring the scaling vector generated at the output of the channel attention module. We have then attempted to remove different percentages of the least significant filters within each layer of the models provided that the difference between the accuracy of the baseline and pruned models is less than or equal 4\%. For the evaluation of both the baseline/benchmark models and lightweight models, we have measured the model accuracy, number of learning parameters, model size, and number of floating point operations (FLOPs) required for the application of the model on a single image. Finally, the proposed bit truncation method was applied to the lightweight models and their performance was evaluated.

\textbf{(2) Object Detection Task.} In this study, we utilized the widely-used Faster-RCNN \cite{ren2015faster} model as a benchmark model and assessed its performance on the state-of-the-art PASCAL VOC-2007 dataset \cite{everingham2010pascal}. The dataset consists of two subsets (i.e., a training set of a total of 5,031 images and a testing set of 4,952 images). The dataset represents 20 distinct classes of objects contained in each image. In this study, a pre-trained VGG-16 has been used as the backbone of the Faster-RCNN model. The baseline model was then trained for 100 epochs using a learning rate of 0.001 and the stochastic gradient descent as the optimizer. 

For the object detection task, we used the intersection over union (IoU), the average precision (AP) and mean AP (mAP) to assess the performance of the Faster-RCNN model. Finally, the proposed TrunMEM was applied on the baseline Faster-RCNN model and the performance (i.e., AP, and mAP) of the model was measured.

\begin{figure*}[!htbp]
	\begin{center}
    \includegraphics[width=0.9\linewidth]{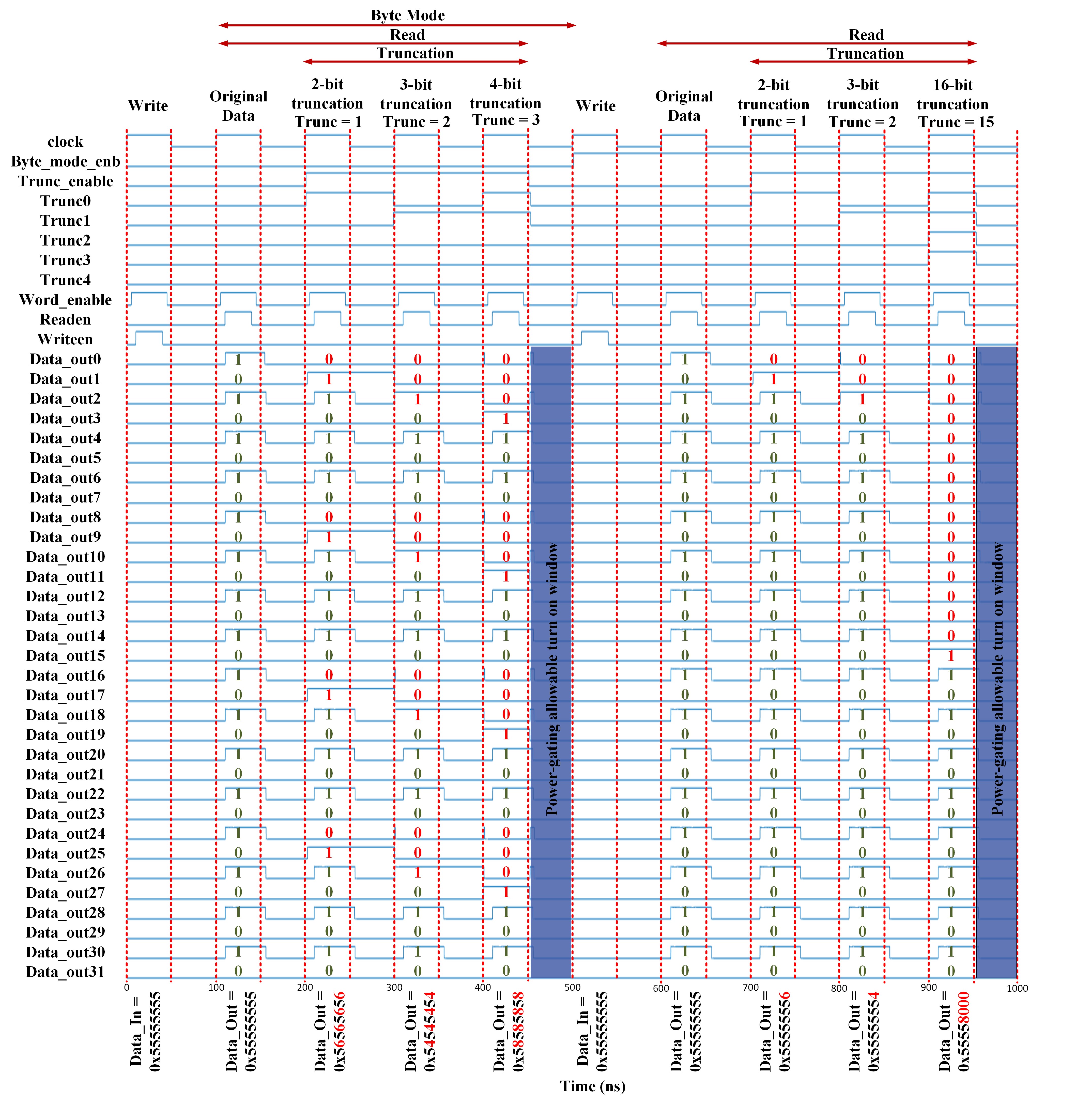}
  	\end{center}
	\vspace{-0.2in}
	\caption{Timing diagram of  TrunMEM. At byte mode, the binary value ``010101...01" is written to a random word, followed by four reading operations (starting with normal reading, 2-bit truncation, 3-bit truncation, and 4-bit truncation). Similarly, with the byte mode disabled, i.e., at the word mode, the value is written, followed by reads at 0-bit, 2-bit, 3-bit and 16-bit truncation.}\label{fig:sram_truncation_diagram}
 \vspace{-0.2in}
 \end{figure*} 

\subsection{Hardware-Level Post-Layout Simulation Results}\label{AA}
\textit{Timing Diagram:} Fig. \ref{fig:sram_truncation_diagram} shows the timing diagram of TrunMEM under different bit truncation conditions based on the post-layout simulations with parasitic extraction. The $clock$ signal doubles as the memory clock and the pre-charge (active-low) of the bit-lines for reading. $Word\_enable$, $Readen$, and $Writeen$ are used to activate the word-line for reading and writing access, and enable reading and writing, respectively. Truncation is enabled by $Trunc\_enable$ being HIGH, byte mode truncation is enabled by $Byte\_mode\_enb$ being LOW, and $Trunc<4:0>$ controls which bit-lines are truncated (i.e., if 1-bit truncation is desired, $Trunc$ is set to the binary value ``00000", else if 2-bit truncation is desired, $Trunc=``00001"$, and the series continues until $Trunc=``11111"$ meaning 32-bit truncation). The $Data\_out<31:0>$ bus is the output of the SRAM, with each output bit coming from a truncation manager as specified in Section \ref{proposed}. Functionality of the TrunMEM is tested by writing and reading to a random word with a clock period of 100ns. Specifically, in the first half of the simulation, the 32-bit binary value of "010101...01" is written into the selected word in the first clock cycle, then this word is read from in the following four clock cycles with different levels of bit truncation (i.e., 0, 2, 3 and 4 bit), with byte mode enabled. The same process is chosen for the second half of the simulation with byte mode disabled (i.e., the word mode enabled) and 0-bit, 2-bit, 3-bit and 16-bit truncation. The optimal truncated values generated by the truncation managers are highlighted in red for each $Data\_Out$. 

The power-gating switching delay, measured at 44 ns, is a factor worth noting but does not impact TrunMEM’s functionality due to its inherent design. When a bit-line powers down, its stored value becomes unreliable and is instead resolved by the truncation manager. Powering the bit-line back on occurs during a memory rewrite, with the condition that data must be rewritten when transitioning from higher to lower truncation levels. Thus, the transistor switching speed is inconsequential as long as it does not interfere with the write process, as highlighted in Fig. \ref{fig:sram_truncation_diagram}.

\textit{Power Efficiency:}
Power consumption is also measured from post-layout simulations, including parasitic extraction, at all possible bit truncation levels. For each truncation set-up, the average power consumption is measured to a random word with a clock period of 100ns. First, the selected word is initialized to the 32-bit hexadecimal value of $``A5A5A5A5"$, then $``FF00FF00"$ is written to the same word, and finally a read operation is executed on the same word. The initial condition and write value are selected to equally include all read/write memory operations. The average power consumption of the write operation during the first 100ns is constant across all simulations, and it is measured to be 2.35mW. The results for the power consumption during the read operation are shown in Fig. \ref{fig:Power} and Fig. \ref{fig:PowerByte} for word truncation mode, and byte truncation mode respectively. As expected, power savings increase as the number of truncated bit-lines increases. In the byte mode truncation test case, the power savings show a strong linear behavior. On average, the power consumption is reduced by 11.90\% or 296$\mu W$ with each additional bit truncated per byte. In the word truncation test case, the power savings increase linearly with a slight slope change per byte intervals, achieving an average of 2.87\% power savings per bit truncated, or 71$\mu W$. Byte truncation managers present a propagation delay of 0.5-4ns when truncating a bit-line and setting the DataOut value based on Section \ref{proposed}. Creating an input dependence for power consumption. In the post layout simulations, alternating bytes have either inputs of $``00"$ or $``FF"$ to measure the impact on power consumption of this input dependency. The results show that an average of 90 $\mu W$ per truncated bit is saved (or 3.6\% power savings) on the bytes where the input is $``00"$, while 60 $\mu W$ (or 2.2\% power savings) are saved when the input is $``FF"$. The results are reflected on the change is slope of Fig. \ref{fig:Power} and prove that, even with this data-input dependency, power consumption decreases with each additional bit truncated. Also, the power overhead caused by the additional circuitry of TrunMem, such as Truncation controller and power-gate, is evaluated, which increase the total power consumption by ~1.1 $\mu W$ (or 0.47\% of total power consumed) as compared to the basic SRAM design.  

\begin{figure}
    \centering
    \subfloat[Continuous truncation mode]{\includegraphics[width=\linewidth]{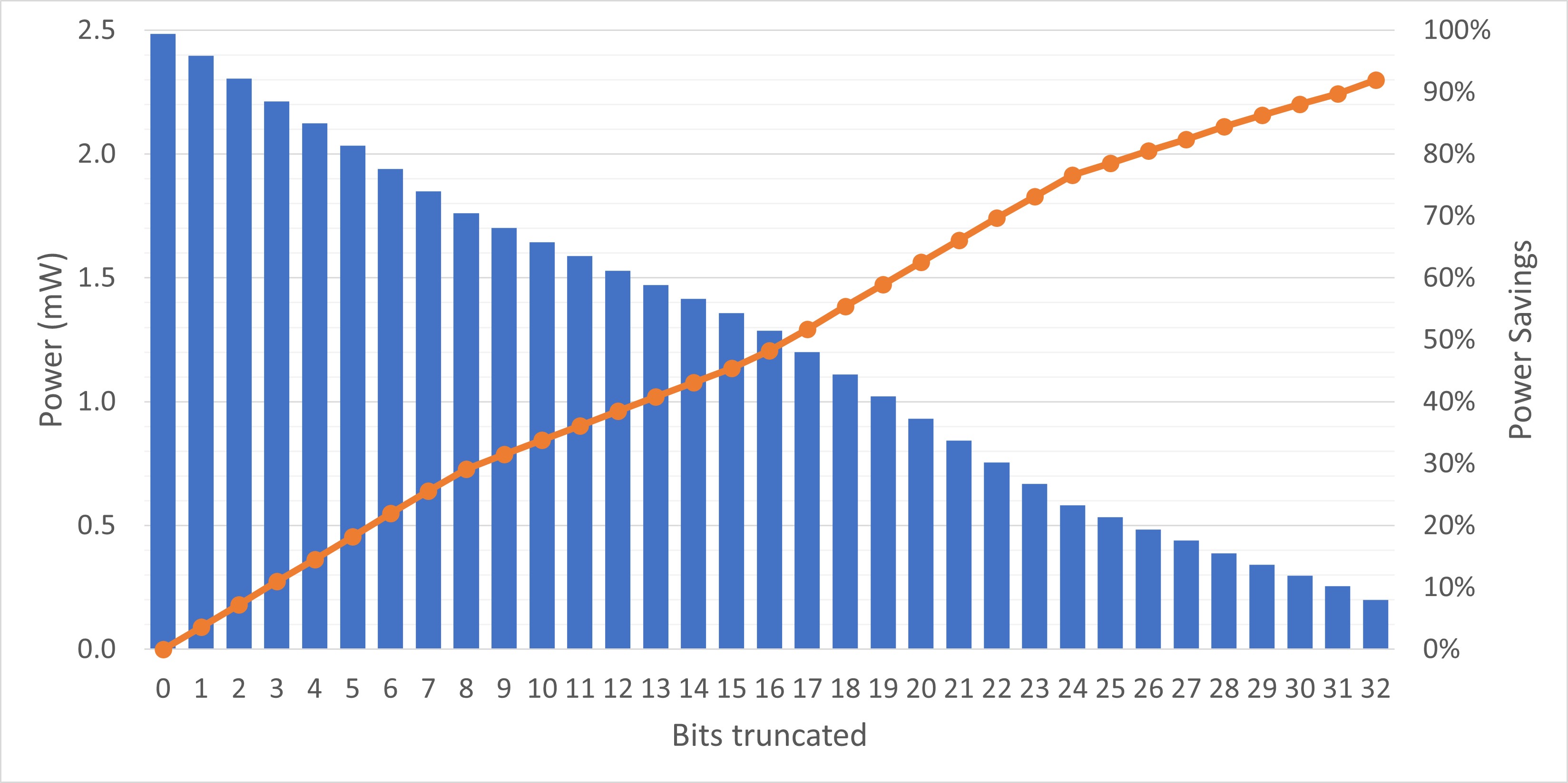}\label{fig:Power}}
    \\
    \subfloat[Byte mode]{\includegraphics[width=\linewidth]{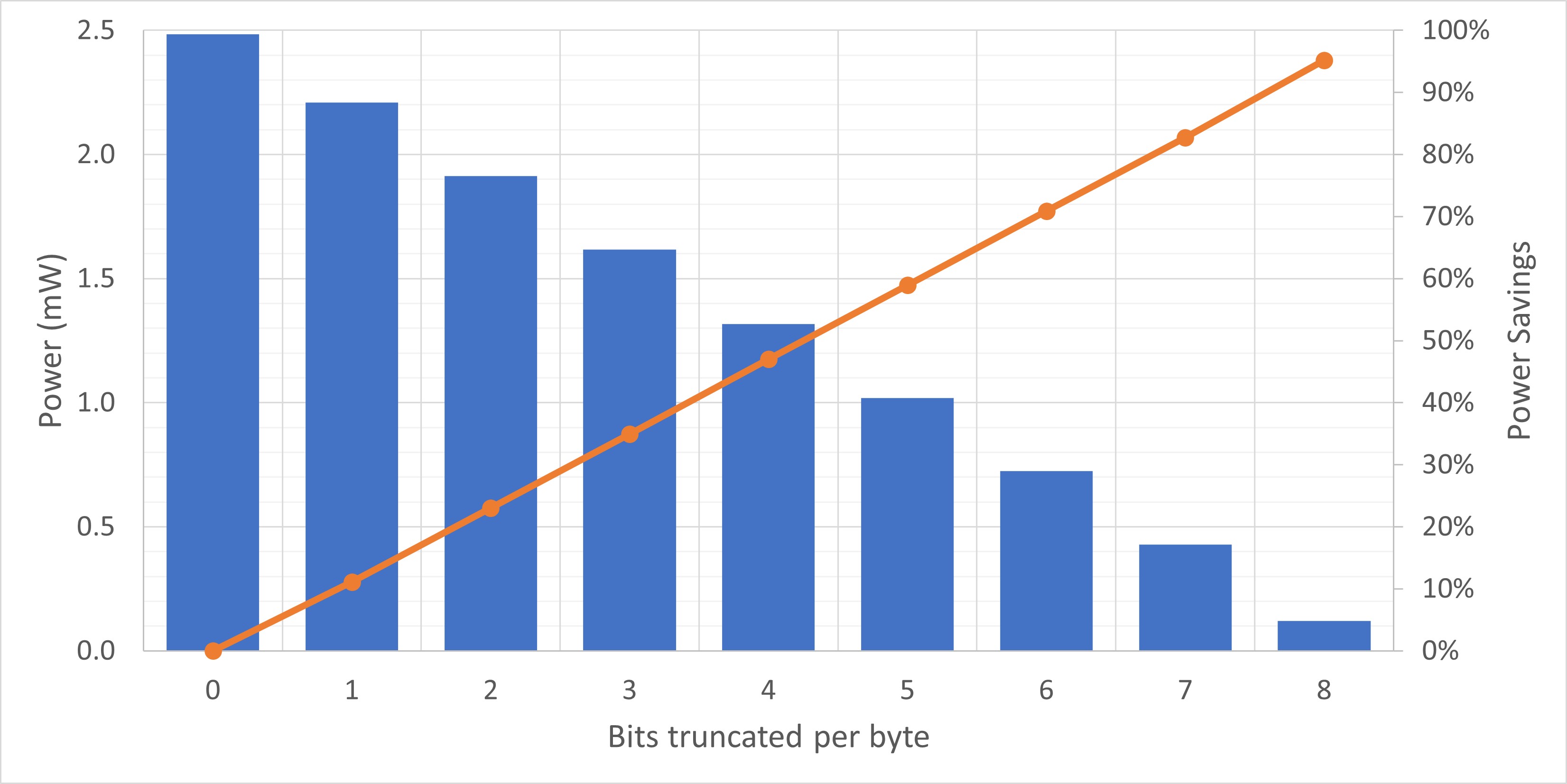}\label{fig:PowerByte}}
    \caption{Power consumption of TrunMEM at each level of bit truncation on a 32x1024 memory array. Using Xscheme \cite{Xschem}, NGspice \cite{NGspice}, and the Sky130 PDK \cite{sky130}}
    \vspace{-0.2in}
\end{figure}

\textit{Implementation cost}:
The layout of the TrunMEM is shown in Fig. \ref{fig:layoutdesign}, where the TrunMEM is highlighted from a multi-project design. The control circuitry of the TrunMEM (i.e. input register, word-line decoder, truncation decoder, and output multiplexer) is integrated using an automated RTL-to-GDSII flow, along with control circuitry of other projects. The exact position and area of the RTL control circuitry is not know, therefore it is not used for overhead calculations, but a visual estimation is highlighted on Fig. \ref{fig:layoutdesign}. The silicon area overhead of the TrunMEM is caused by the added 32 bit truncation managers. These truncation managers are divided into 4 byte managers, while area and placement is optimized to fit the width of the 1024 words $by$ 32 bits 6T SRAM. The main source of area consumption inside a single truncation manager are the power gate transistors; to minimize area, they are implemented using a particular finger-based design approach. As observed in Fig. \ref{fig:layoutdesign}, each power transistor has a width of 16 µm and it is implemented with eight fingers. The silicon area consumption of the truncation managers is 15,210 $\mu m^2$, which represents only 2.89\% compared to the 525,915 $\mu m^2$ total area of TrunMEM. Since each bit-line is managed by a single truncation manager, the area overhead of TrunMEM decreases as the number of memory words increases, with the only constraint being the switching speed of the power-gating transistors. 

\begin{figure*}[ht]
\centering
\includegraphics [width=\linewidth]{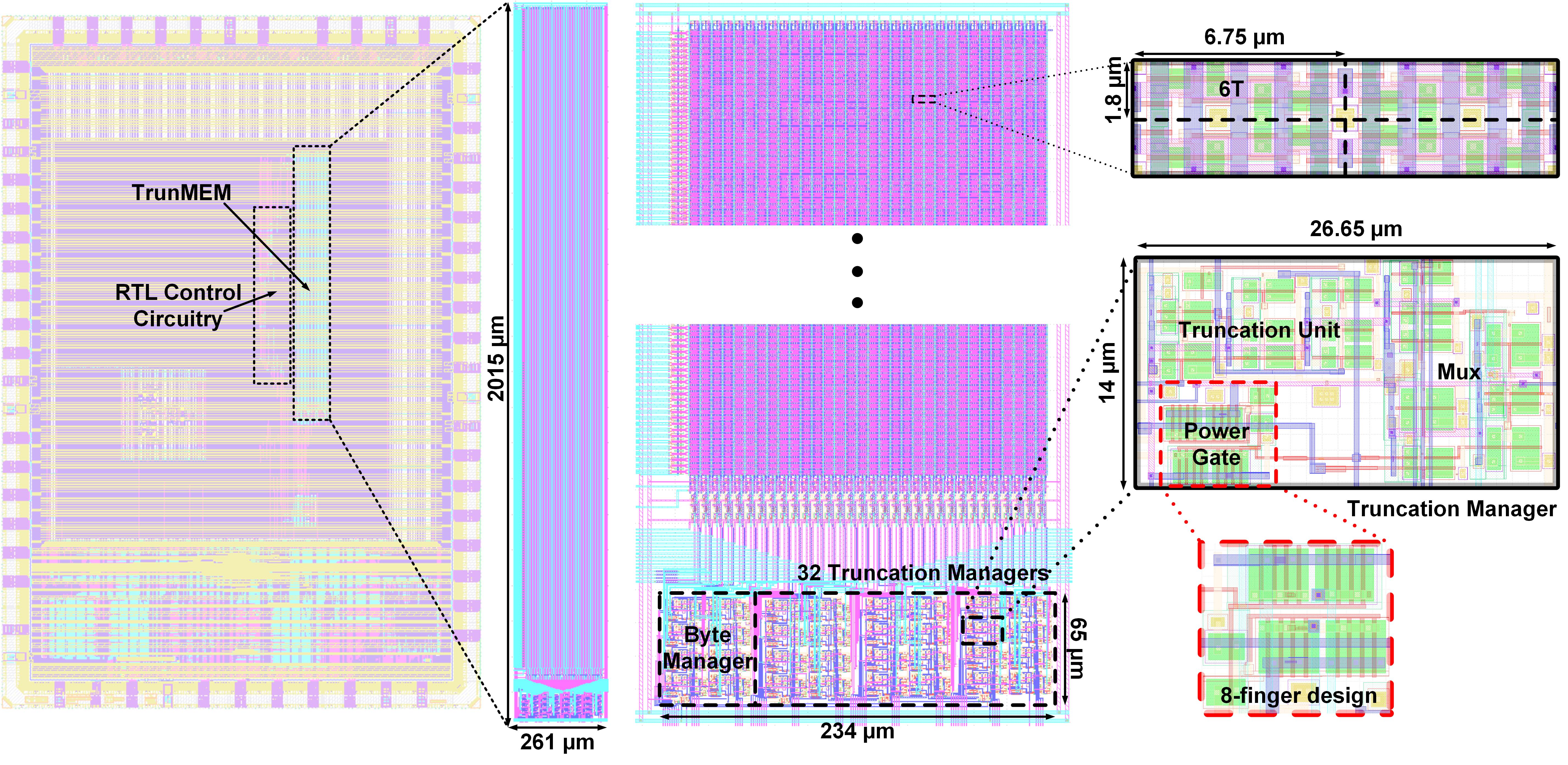}
\vspace{-0.3in}
\caption{TrunMEM layout, sent to fabrication as part of a multi-project design on Skywater 130nm technology \cite{sky130}. The TrunMEM is designed as a 32x1024 6T SRAM with 32 bit truncation managers divided into 4 byte managers.}
\label{fig:layoutdesign}
\vspace{-0.2in}
\end{figure*}

\subsection{Application-Level Evaluation Results - Videos}\label{Application_Evaluation1}


Table \ref{table:VideoAverages} lists the statistical results (Averages ± Standard Deviation) of output quality and power savings of 4,525 videos with TrunMEM. It can be seen that the Content-aware \cite{edstrom2019content} and ROI-aware \cite{ROI-AwareVideoStorage} video systems produce fairly high video quality measured by the peak signal-to-noise ratio (PSNR) and the structural similarity index (SSIM). The ROI-aware method, on average, achieves the highest video quality by PSNR (51.60±2.11dB), and SSIM (99.66±0.29\%) for any given video (Table \ref{table:VideoAverages}). In terms of power savings, among three video techniques, the luminous-Aware method \cite{edstrom2016luminance} with TrunMEM achieves the highest power savings (34.93-47.02\%) while sacrificing video quality. 14±5.28\% and 20.95±6.08\% power savings are enabled by the content-aware \cite{edstrom2019content} and ROI-aware \cite{ROI-AwareVideoStorage}, respectively.

Another important observation from Table \ref{table:VideoAverages} is that the Luminous-Aware method does not present any variance in the enabled power savings. Because, for any given video, the ambient illumination is assumed not to change, thus locking the entire video to either 3 or 4 LSBs truncated across the entire video.



To further evaluate the different video techniques with TrunMEM, the visual output quality of several representative videos is illustrated in Fig. \ref{Fig:VideoVisual}. Specifically, the videos 'Faith Rewarded Talk', 'Ian Poulter Golf', and 'Famille Des Poussins' are presented as these videos show the range of possible truncation levels (low, medium, and high variance) categorize as defined by \cite{edstrom2019content}. The visual differences between the MB and ROI methods are hard to distinguish in every case, except 'Famille Des Poussins'. From this frame comparison, it can be seen that the woman's facial features have noticeable color distortion in the MB method. For the Luminous Aware methods (overcast and sunlight conditions), the color distortion becomes very apparent \cite{edstrom2016luminance}. Visible facial features have notifiable distortion in overcast, and heavily distorted in sunlight. Looking closely at 'Ian Poulter Golf', it can be seen how the overcast and sunlight conditions comparisons distort the slow color changes across the grass, and green wall in the background. For the sunlight image this color distortion is amplified and is noticeable. It should be noted however, as PSNR and SSIM video quality metrics show the most distortion for the Luminous aware method, it also saves the most power consumption compared to the other two methods (Fig. \ref{Fig:VideoVisual}).

From the above analysis, it can be concluded that, the state-of-the art video techniques, including Content-aware \cite{edstrom2019content} and ROI-aware \cite{ROI-AwareVideoStorage}, have its benefits for a specific video application. The proposed TrunMEM can support all of those different video applications with enhanced power efficiency, although the enabled power savings of each technique is technology-dependent. Accordingly, TrunMEM makes it feasible to adapt from three different video techniques in real-time, in order to optimize the power consumption or video quality according to the requirement of the applications.

\begin{figure*}[t]
\centering
\includegraphics[width=0.99\linewidth]{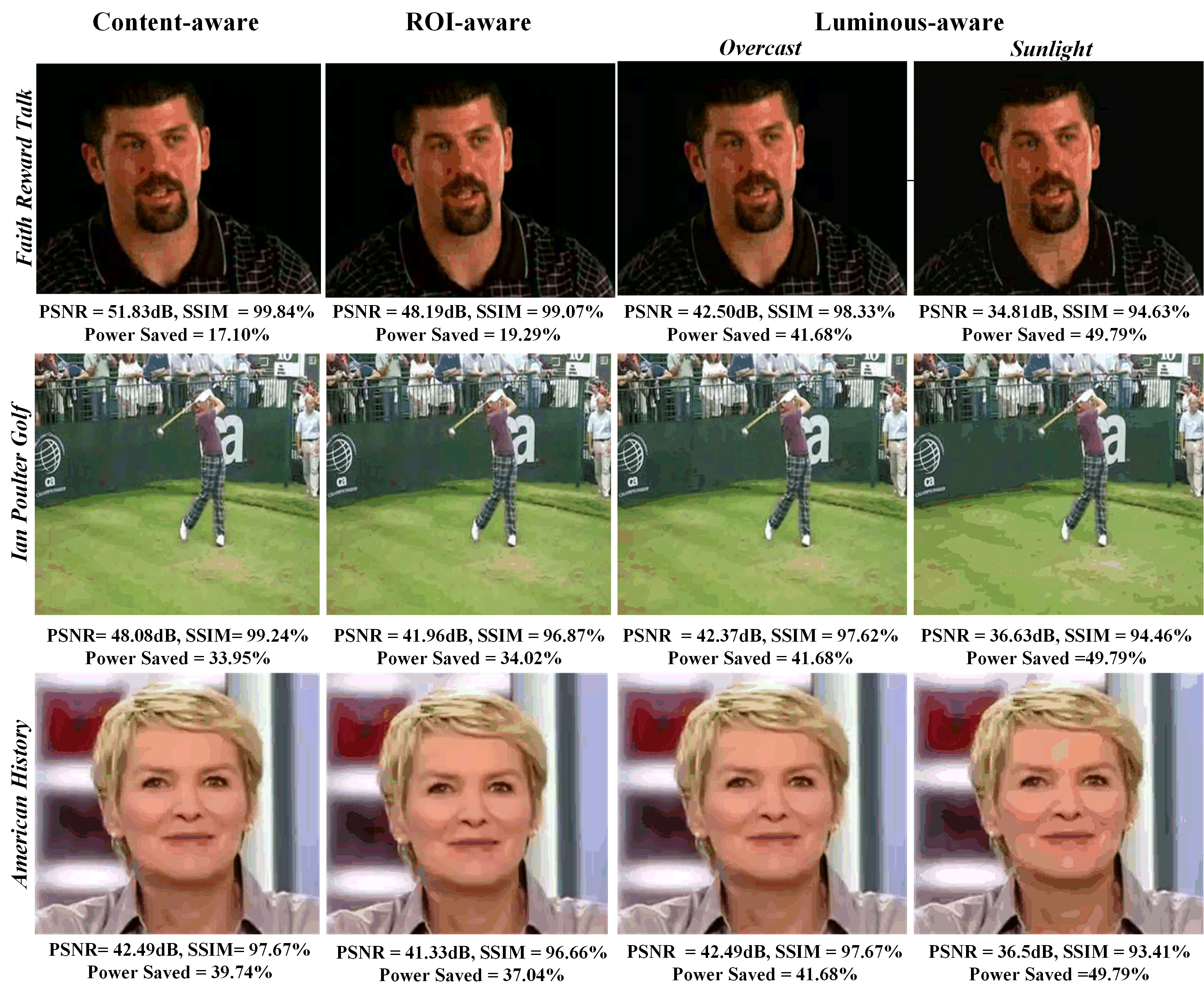}
\caption{Visual output quality with TrunMEM. Source videos 
from \cite{HMDB_Dataset, YoutubeM8}. 'Faith Rewarded Talk' is classified as a low variance, 'Ian Poulter Golf' is medium variance, and 'Famille Des Poussins' is high variance using the content-aware technique \cite{edstrom2019content}. }
\label{Fig:VideoVisual}
\end{figure*}



\begin{table}[ht]
\centering
\caption{Video Quality and Power savings of state-of-the art video systems with TrunMEM (averages ± standard deviation). Sample size = 4,525 videos from dataset \cite{HMDB_Dataset}}
\label{table:VideoAverages}
 \resizebox{\columnwidth}{!}{%
\begin{tabular}{l|c|c|c|}
\cline{2-4}
                                    & \multicolumn{1}{l|}  {PSNR(dB)} & \multicolumn{1}{l|}{SSIM(\%)} & \multicolumn{1}{l|}{Power Saved(\%)} \\ \hline
\multicolumn{1}{|l|}{content-aware}     & 51.60±2.11                    & 99.66±0.29                    & 14±5.28                           \\ \hline
\multicolumn{1}{|l|}{ROI-aware}    & 43.15±1.61                    & 97.30±0.75                    & 20.95±6.08                          \\ \hline
\multicolumn{1}{|l|}{luminance-aware(overcast)} & 42.32±0.57                    & 97.45±0.35                    & 34.93±0                                \\ \hline
\multicolumn{1}{|l|}{luminance-aware(Sunlight)} & 36.38±0.68                    & 93.87±1.06                    & 47.02±0                                \\ \hline
\end{tabular}
}
\end{table}



%

\subsection{Application-Level Evaluation Results - DNNs}\label{Application_Evaluation2}

\textit{DNN Categorization Task:} 
We evaluated the DNN performance using a custom in-house hardware emulator, emulating the effects of TrunMEM, utilizing TensorFlow as the base DNN processing library \cite{tensorflow2015-whitepaper}. We have reported the performance, number of Floating Point Operations (FLOPs), and number of learning parameters for the baseline and lightweight (i.e., when attention-based filter pruning was solely applied on the models) VGG-16 and ResNet-56 models applied on the CIFAR-10 and CIFAR-100 datasets in Table \ref{Table:model size}. 

It is clear from Table \ref{Table:model size} that significantly lower-complexity VGG-16 and ResNet-56 models were realized using the channel attention based filter pruning technique at a slight reduction in the model accuracy as compared to the baseline models. For example, for the VGG-16 model applied on the CIFAR-10 dataset, we were able to achieve approximately 96\% and 90\% reduction in the number of learning parameters and FLOPs respectively with 2.73\% decrease in the model accuracy using the channel attention based filter pruning technique \cite{liu2021channel}. For the CIFAR-100 dataset with higher number of classes and lower number of samples per each class, we were able to achieve at least 87\% and 74\% reduction in the number of model parameters and FLOPs, respectively, for the filter pruned model, at the cost of a tolerable loss of model accuracy (i.e., almost 4\%). For the ResNet-56 architecture, we achieved a reduction of nearly 75\% in the number of parameters and FLOPs when applied on the CIFAR-10 dataset with solely 2\% accuracy loss. Similarly, for the CIFAR-100 dataset, a reduction of approximately 62\% in the number of learning parameters and 52\% in the number of FLOPs were achieved with almost a 3\% accuracy drop. 


\begin{table}[htbp]
\centering
\caption{Size of allocated memory, FLOPS, and accuracy for baseline (BL), and lightweight (LW) models.}
 \resizebox{\columnwidth}{!}{%
\begin{tabular}{|l|ll|ll|}
\hline
                               & \multicolumn{2}{l|}{VGG-16}               & \multicolumn{2}{l|}{ResNet-56}            \\ \hline
Datasets                       & \multicolumn{1}{l|}{CIFAR-10} & CIFAR-100 & \multicolumn{1}{l|}{CIFAR-10} & CIFAR-100 \\ \hline
Baseline Model Accuracy        & \multicolumn{1}{l|}{93.73}    & 72.17     & \multicolumn{1}{l|}{91.35}    & 69.95     \\ \hline
Baseline Model   FLOPs         & \multicolumn{1}{l|}{705M}     & 705.8M    & \multicolumn{1}{l|}{126.2M}   & 126.2M    \\ \hline
Baseline Model   Parameters    & \multicolumn{1}{l|}{15.24M}   & 15.29M    & \multicolumn{1}{l|}{0.86M}    & 0.867M    \\ \hline
Lightweight   Model Accuracy   & \multicolumn{1}{l|}{91.0}     & 68.13     & \multicolumn{1}{l|}{89.63}    & 67.02     \\ \hline
Lightweight   Model FLOPs      & \multicolumn{1}{l|}{68M}      & 180M      & \multicolumn{1}{l|}{31.82M}   & 61.1M     \\ \hline
Lightweight   Model Parameters & \multicolumn{1}{l|}{0.68M}    & 2M        & \multicolumn{1}{l|}{0.21M}    & 0.33M     \\ \hline
FLOPs Reduction                & \multicolumn{1}{l|}{90.29\%}  & 74.43\%   & \multicolumn{1}{l|}{74.80\%}  & 51.5\%    \\ \hline
Parameters   Reduction         & \multicolumn{1}{l|}{95.52\%}  & 86.52\%   & \multicolumn{1}{l|}{74.65\%}  & 61.7\%    \\ \hline
\end{tabular}
\label{Table:model size}
}
\end{table}

Next, TrunMEM was applied on both the baseline and lightweight models (i.e., post filter pruning). Fig. \ref{fig:classification_performance} shows the performance of both the baseline VGG-16 and ResNet-56 models and their corresponding lightweight versions, trained on CIFAR-10, and CIFAR-100 across all possible levels of truncation.

\begin{figure*}[t]
\centering
\includegraphics[width=0.65\linewidth]{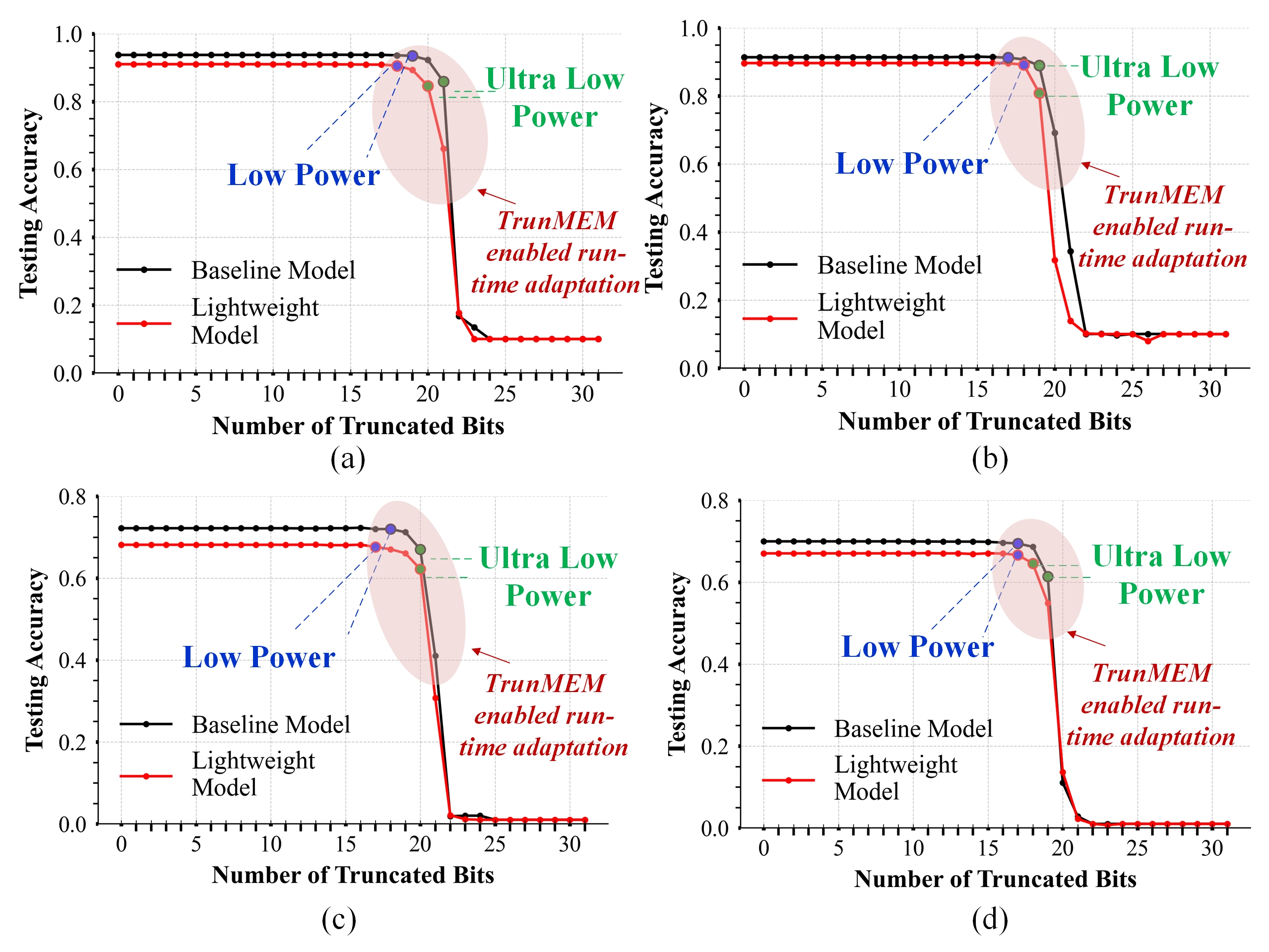}
\caption{Performance of baseline and Lightweight models with software (pruning) and hardware (TrunMEM) techniques: (a) VGG-16 models with CIFAR-10 dataset; (b) ResNet-56 models with CIFAR-10 dataset; (c) VGG-16 models with CIFAR-100 dataset; and and (D) ResNet-56 models with CIFAR-100 dataset. On each graph, "Low power" mode ensures validation accuracy remains within 0.5\%, and "Ulta Low Power" ensures validation accuracy remains within 10\% of the original accuracy}\label{fig:classification_performance}
 \end{figure*}

From Table \ref{Table:model size}, it can be seen that both the lightweight models with and without truncation provide a significant advantage over the baseline models in terms of the size of allocated memory where up to $22\times$ reduction in memory for the VGG-16 model that was trained and tested on CIFAR-10 dataset was achieved. While Fig \ref{fig:classification_performance} shows nearly identical model performance drop-offs between the baseline and lightweight models. Showing how design time memory optimizations work independently of the TrunMem run time power optimizations.

As shown in Fig. \ref{fig:classification_performance}, it is clear that both the performance of baseline VGG-16 and ResNet-56 models and associated filter-pruned lightweight models exhibited resilience to bit truncation, with optimal settings for 'Low Power' (blue) and 'Ultra Low Power' (green) varying between 16-20 bits truncated; achieving 44.17\%-61.42\% power savings. At the 'Low Power' setting, the model performs within 0.5\% accuracy, at this point, no noticeable accuracy drop could be observed at run-time. 'Ulta Low Power' however, does drop by as much as 10\%, and should only be used if power consumption becomes more important than model performance. Thus, based on the lightweight model enabled by software techniques (Table \ref{Table:model size}), TrunMem is effective in enabling run-time adaptation with power savings. 


\textit{Object Detection Task:} 
Fig. \ref{fig:FasterRCNN_mAP} illustrates the baseline average precision (AP) across 20 classes and the overall mean average precision (mAP) of the Faster-RCNN model as we vary the number of truncated bits for the learning parameters of the model. The highest AP of 0.82 was achieved for the car class, while the lowest AP of 0.405 was observed for the potted plant class. In addition, a mAP of 0.663 was obtained across all classes. It is also clear that the model performance (i.e., mAP) remained almost unchanged up to 17 bits. Beyond this point, the performance started to decline, and by 22 bits, the model completely lost its ability to detect any objects in an image (i.e., mAP of 0).

\begin{figure}[htbp]
	\begin{center}
  \includegraphics[width=0.9\linewidth]{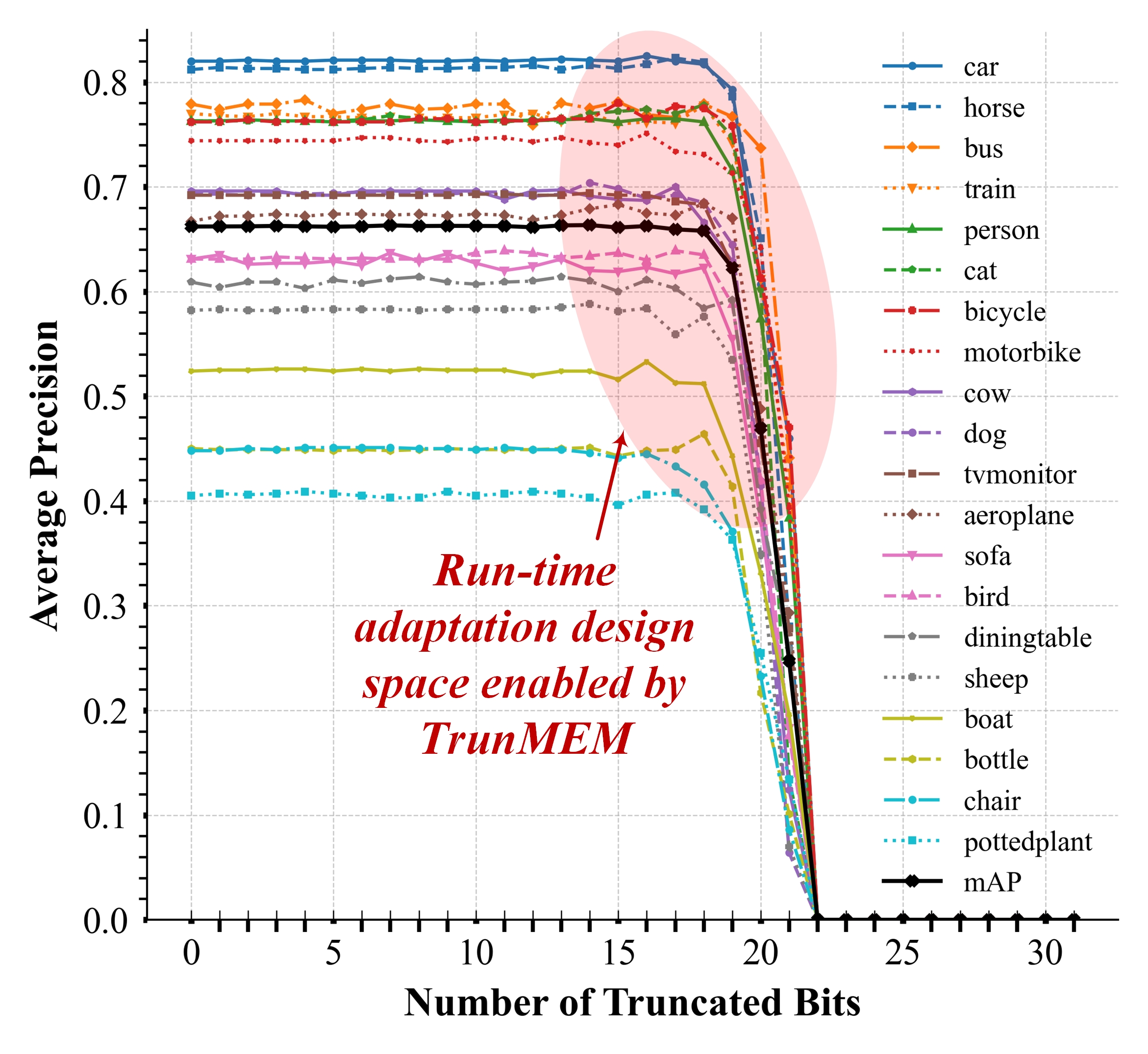}
	\end{center}
	\vspace{-0.2in}
	\caption{Performance of Faster-RCNN models with TrunMEM on PASCAL VOC-2007 dataset. }\label{fig:FasterRCNN_mAP}
    
 \end{figure}

 Further, Fig. \ref{fig:FastRCNN_Visual} shows a sample image of the PASCAL VOC-2007 test dataset, along with the bounding box predictions generated by the baseline Faster-RCNN model and several bit-truncated versions of the same model. Moreover, the prediction confidence for the predicted class is also reported in Fig. \ref{fig:FastRCNN_Visual}. For the horse class, the 17-bit truncated model produced a bounding box with an IoU of 0.992, with power savings of 51.69\% while the 21-bit truncated model achieved an IoU of 0.706, and power savings of 66.08\%. For the person class, the 17-bit truncated model resulted in a bounding box with an IoU of 0.991, and the 21-bit truncated model yielded an IoU of 0.852. Therefore, at a 21-bit truncation, the model can still predict the bounding box for the horse and person class with a slight reduction in the prediction area. However, at a 22-bit truncation, the model failed to detect the horse or the person where the IoU was found to be 0.

\begin{figure}[htbp]
	\begin{center}
  \includegraphics[width=\linewidth]{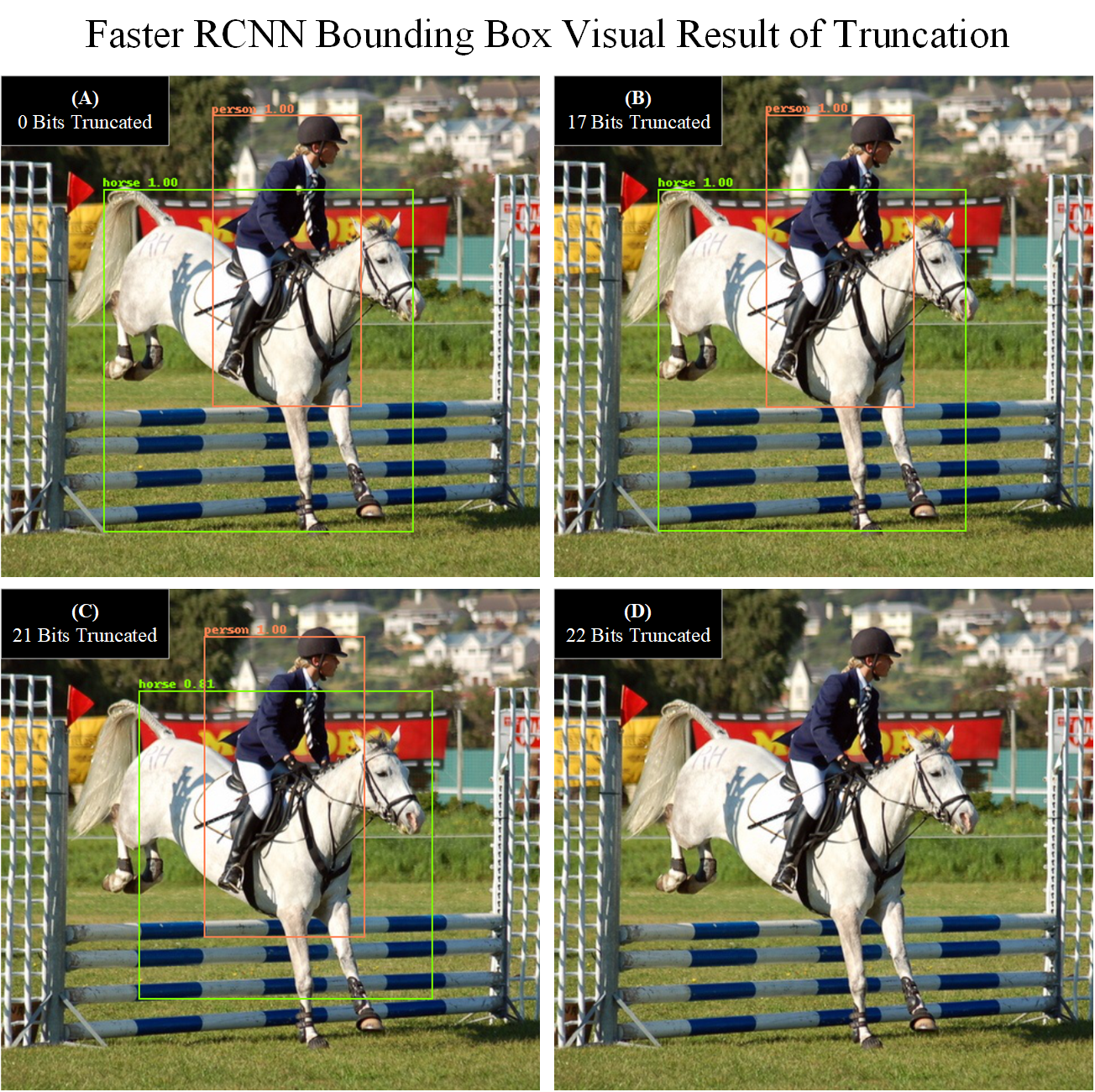}
	\end{center}
	\vspace{-0.2in}
	\caption{Fast RCNN bounding boxes at (A) No truncation. (B) 17 bits truncation. (C) 21 Bits truncated, and (D) 22 bits truncated. It can be observed that the results between 0 and 17 bits truncated are perfectly identical. Whereas, 21 bits truncated the boxes for both the person and horse are misaligned, and classification for the horse also decreases to 81\% confidence. At 22 bits truncated, total failure occurs, with no boxes produced.}\label{fig:FastRCNN_Visual}
 \vspace{-0.2in}
 \end{figure}

 \begin{table*}[!b]
\caption{Comparison with the state-of-the art}
\begin{tabular}{|l|c|c|c|c|}
\hline
                             & \begin{tabular}[c]{@{}c@{}}luminance-aware   \\ \cite{chen2015vcas, edstrom2016luminance,chen2018viewer}\end{tabular} & \begin{tabular}[c]{@{}c@{}}content-aware   \\ \cite{edstrom2019content}\end{tabular} & \begin{tabular}[c]{@{}c@{}}ROI-aware   \\ \cite {ROI-AwareVideoStorage} \end{tabular} & \textbf{This work}\\ \hline
                              
truncation mode & byte mode                       & byte mode              & byte mode                                    & byte mode and continuous truncation mode \\ \hline

\begin{tabular}[c]{@{}l@{}}number of truncated bits   \\ (byte mode)\end{tabular} & 3 or 4 LSBs                      & 0-4 LSBs              & 0 or 3 LSBs                                    & 0-8 LSBs (full flexibility)  \\ \hline


\begin{tabular}[c]{@{}l@{}}enabled maximum   \\ power savings with videos\end{tabular} & \begin{tabular}[c]{@{}c@{}} 32.6\% \\ (4 LSBs truncated) \end{tabular}  & \begin{tabular}[c]{@{}c@{}} 33.31\% \\ (4 LSBs truncated) \end{tabular} & \begin{tabular}[c]{@{}c@{}} 19.74\% \\ (3 LSBs truncated) \end{tabular}  & \begin{tabular}[c]{@{}c@{}}47.02\% \\ (4 LSBs truncated) \end{tabular} \\ \hline

\begin{tabular}[c]{@{}l@{}}number of truncated bits   \\ (continuous truncation mode)\end{tabular} & $\times$                      & $\times$               & $\times$                                    & 0-32 (full flexibility)  \\ \hline

\begin{tabular}[c]{@{}l@{}}enabled optimal   \\ power savings with DNN\end{tabular} & $\times$                       & $\times$              & $\times$                                   & 51.69\% @ 17bits  \\ \hline

\begin{tabular}[c]{@{}l@{}}silicon area overhead   \\ over basic design \end{tabular}      & $<$0.01\%                       & 0.32\%              & \begin{tabular}[c]{@{}c@{}}-   \\ (no layout implementation) \end{tabular}                                & 2.89\%  \\ \hline
\end{tabular}
\label{Table:comparison with other work}
\end{table*}

\begin{figure}
\centering
  \includegraphics[width=0.8\linewidth]{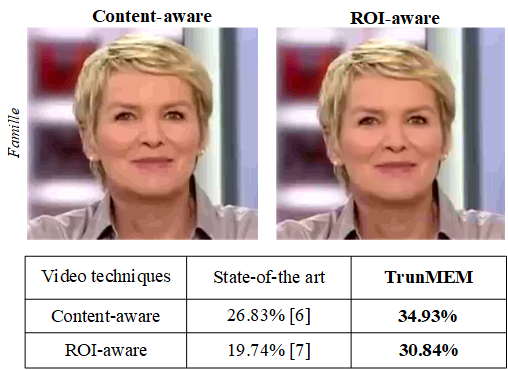}
  \caption{Power efficiency comparison with the state-of-the art. Famille video available \cite{YoutubeM8} (video tag: wF6lvdXXwc4)}
  \label{fig:power}
\end{figure}

\subsection{Comparisons to existing works}

Table \ref{Table:comparison with other work} compares TrunMEM with state-of-the art truncation memory designs. As shown, the proposed TrunMEM enables full flexibility in running-time quality adaptation. First, luminance-aware \cite{chen2015vcas, edstrom2016luminance,chen2018viewer}, content-aware \cite{edstrom2019content}, and ROI-aware \cite{ROI-AwareVideoStorage} can only enable byte-mode truncation and therefore cannot support the word-based data storage, such as deep learning systems. In addition, in terms of the byte mode truncation, existing designs can only enable limited number of truncated bits: in each byte, luminance-aware \cite{chen2015vcas, edstrom2016luminance,chen2018viewer} truncates 3 or 4 LSBs, content-aware \cite{edstrom2019content} truncates 0 to 4 LSBs, and ROI-aware \cite{ROI-AwareVideoStorage} truncates 0 or 3 LSBs. As a result, existing memories can also be used for a specific technique. Alternatively, TrunMEM is capable of truncating the entire range of bits, i.e., 0-8 per byte in the byte mode and 0-32 per word in the word mode (with the word size of 32). Although such flexibility comes at a cost of 2.89\% silicon area overhead, which is higher than existing designs, the overhead can be reduced with more words added to the memory, as discussed in Section IV-B. Accordingly with TrunMEM, those different viewer-aware video techniques may be applied simultaneously to achieve power optimization. 

In terms of power savings with the same number of truncated bits, TrunMEM achieves 47.02\% power savings with 4 truncated LSBs, which is much higher than existing designs in \cite{chen2015vcas, edstrom2016luminance,chen2018viewer,edstrom2019content}. ROI-aware \cite{ROI-AwareVideoStorage} achieves 19.74\% power savings as it only truncate 3 LSBs.  Since the content-aware and ROI-aware techniques also strongly depend on specific videos, we also compared the power effectiveness of those two techniques using the same video 'Famille'. The results are shown in  Fig. \ref{fig:power}. It can be seen that both techniques enable additional power savings with TrunMEM compared to existing designs.

From the above analysis, it can be concluded that TrunMEM enables run-time quality adaptation with full flexibility as well as significantly enhanced power efficiency compared to existing techniques.

\section{Conclusion}
This paper has presented a novel quality-adaptive bit-truncation memory design - TrunMEM to support different approximate applications, such as video and deep learning. TrunMEM enables up to 47.02\% and an optimal 51.69\% power savings in videos and DNN inference, respectively. With full truncation flexibility, TrunMEM can also be applied to general approximate applications. In particular, TrunMem demonstrates strong promise for general applications into application-specific hardware platforms such as GPUs and edge devices.

Since TrunMEM can support three different view-aware video techniques simultaneously, future work will develop a smart video system which can adapt the truncation method depending on both viewing surroundings (e.g., overcast, sunlight, dark) and video content characteristics (e.g., ROI), thereby further optimizing power efficiency. We also plan to conduct chip-based testing and perform silicon data analysis in our future work after the chips are received. Finally, we plan to further study the relationship
between the hardware-based run-time truncation in the DNN inference process and software-based compression (e.g., pruning, quantization) in the model training process to better adapt the trade-off between power efficiency and quality (e.g., accuracy) to meet the requirements of a variety of AI tasks on power-constraint edge devices.


\section*{Acknowledgment}
This work was supported in part by the National Science Foundation under OIA-2218046, CNS-2211215, ECCS-2420994, CCF-224734, EES-2427766, and OIA-2428981, the U.S. Department of Energy under DE-SC0025561, and Alabama EPSCoR Graduate Research Scholars Program.

\bibliography{ref}
\bibliographystyle{IEEEtran}
\vspace{-0.5in}
\begin{IEEEbiography}
[{\includegraphics[width=1in,height=1.25in,clip]{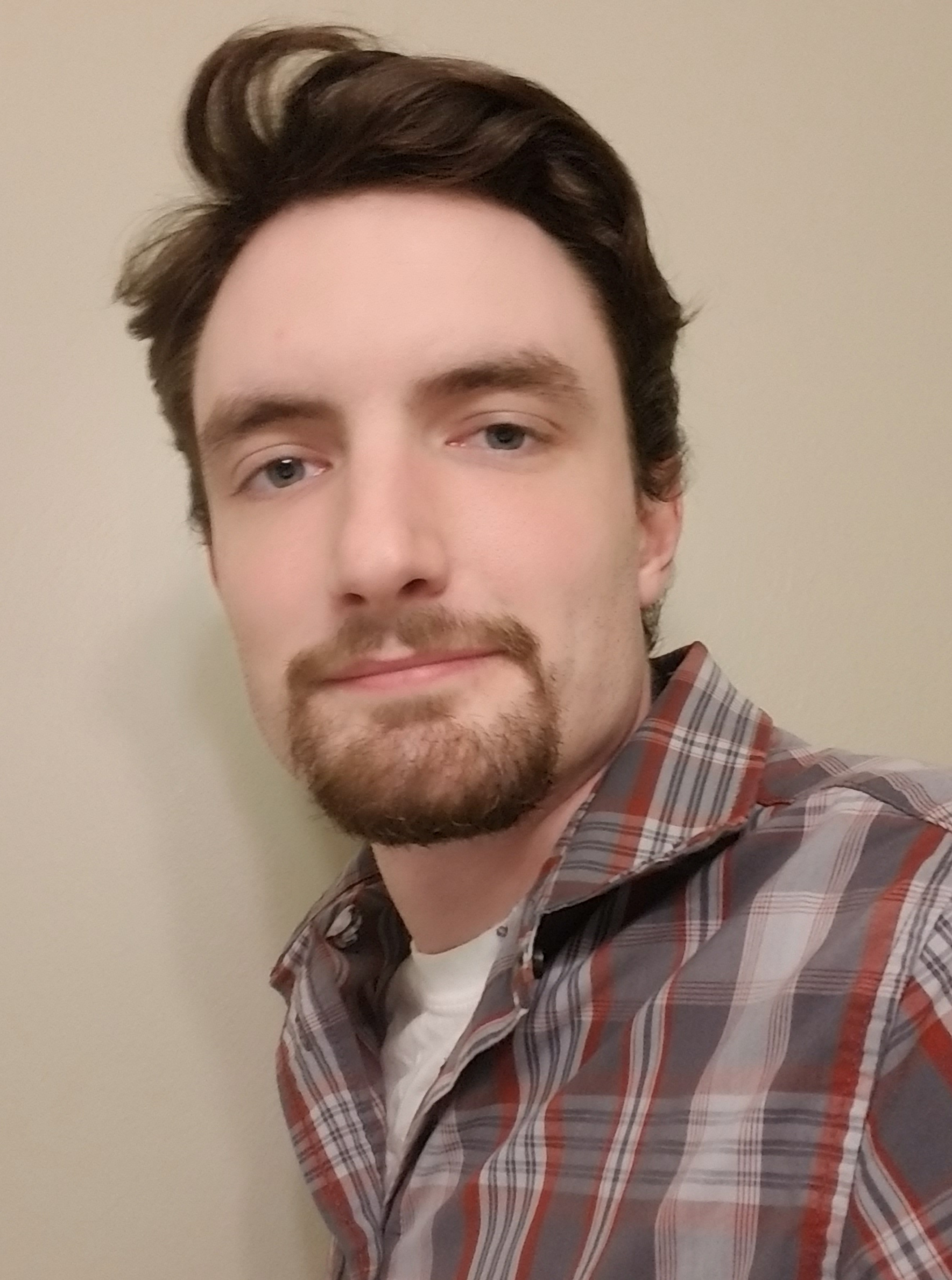}}]{William Oswald} Received the B.S. degree in computer engineering, a M.S. degree in electrical engineering, and a Ph.D. in systems engineering from the University of South Alabama in 2016, 2022, and 2024 respectively. He is currently employed as a senior machine learning engineer. His research interests include low-power circuits, image processing, autonomous navigation systems, physics-informed machine learning, and novel machine learning architectures. 
\end{IEEEbiography}

\vspace{-.3in}

\begin{IEEEbiography}[{\includegraphics[width=1in,height=1.25in,clip]{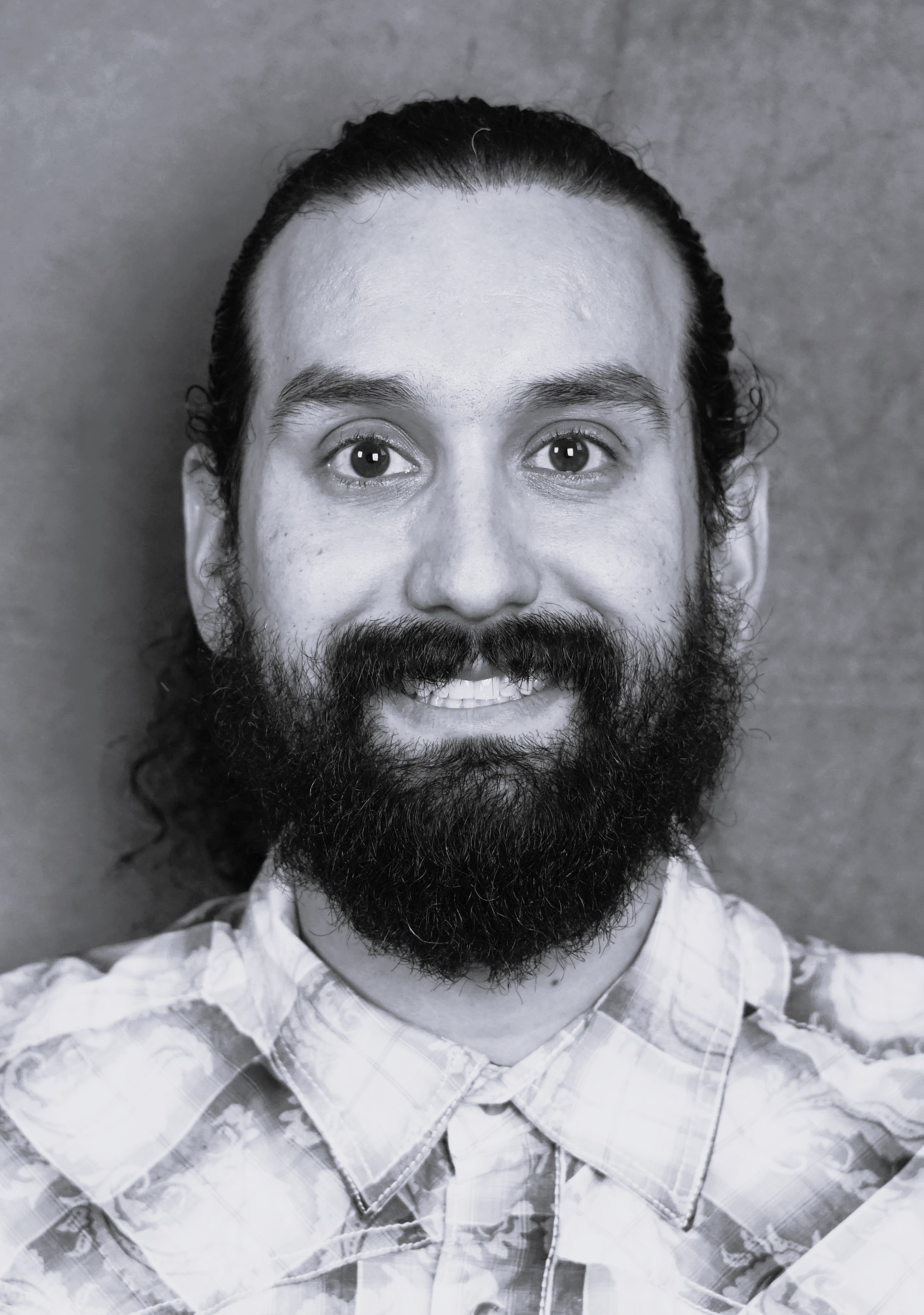}}]
{Mario Renteria-Pinon} (Member, IEEE) earned his B.S. degree in electrical and computer engineering from The University of Texas at El Paso in 2014, his M.S. degree in electrical engineering from the University of Washington in 2016, and his Ph.D. in engineering from New Mexico State University in 2023. He worked as a Postdoctoral Fellow for the IMPACT lab at the University of South Alabama, where he conducted research on AI hardware implementation, low-power memory, and privacy hardware. Currently, Dr. Renteria-Pinon is an Assistant Professor at New Mexico State University. His research interests extend to analog and mixed-signal integrated circuits, low-power sensors, data converters, and ASICs for biomedical applications.
\end{IEEEbiography}

\vspace{-.3in}

\begin{IEEEbiography}[{\includegraphics[width=1in,height=1.25in,clip]{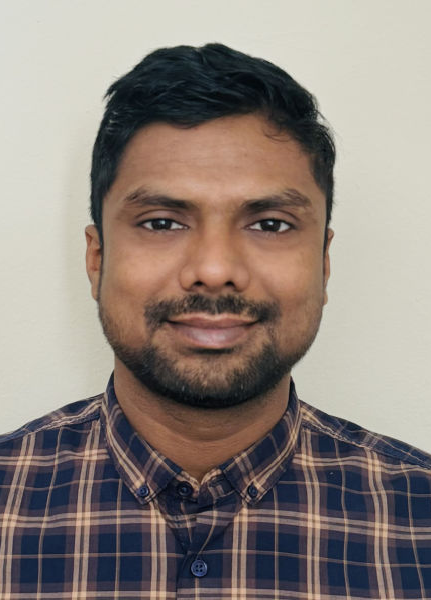}}]
{Md. Sajjad Hossain} received his B.Sc. degree in Electronics and Telecommunication Engineering from Rajshahi Engineering and Technology, Bangladesh in 2017, his M.S. degree in IT Convergence Engineering from Kumoh National Institute of Technology, South Korea in 2021 and is currently pursuing a Ph.D. in Electrical Engineering from the University of Alabama. His research interests include memory design, edge computing, IoT and machine learning.
\end{IEEEbiography}

\vspace{-.3in}

\begin{IEEEbiography}[{\includegraphics[width=1\linewidth,height=1.25\linewidth,clip]{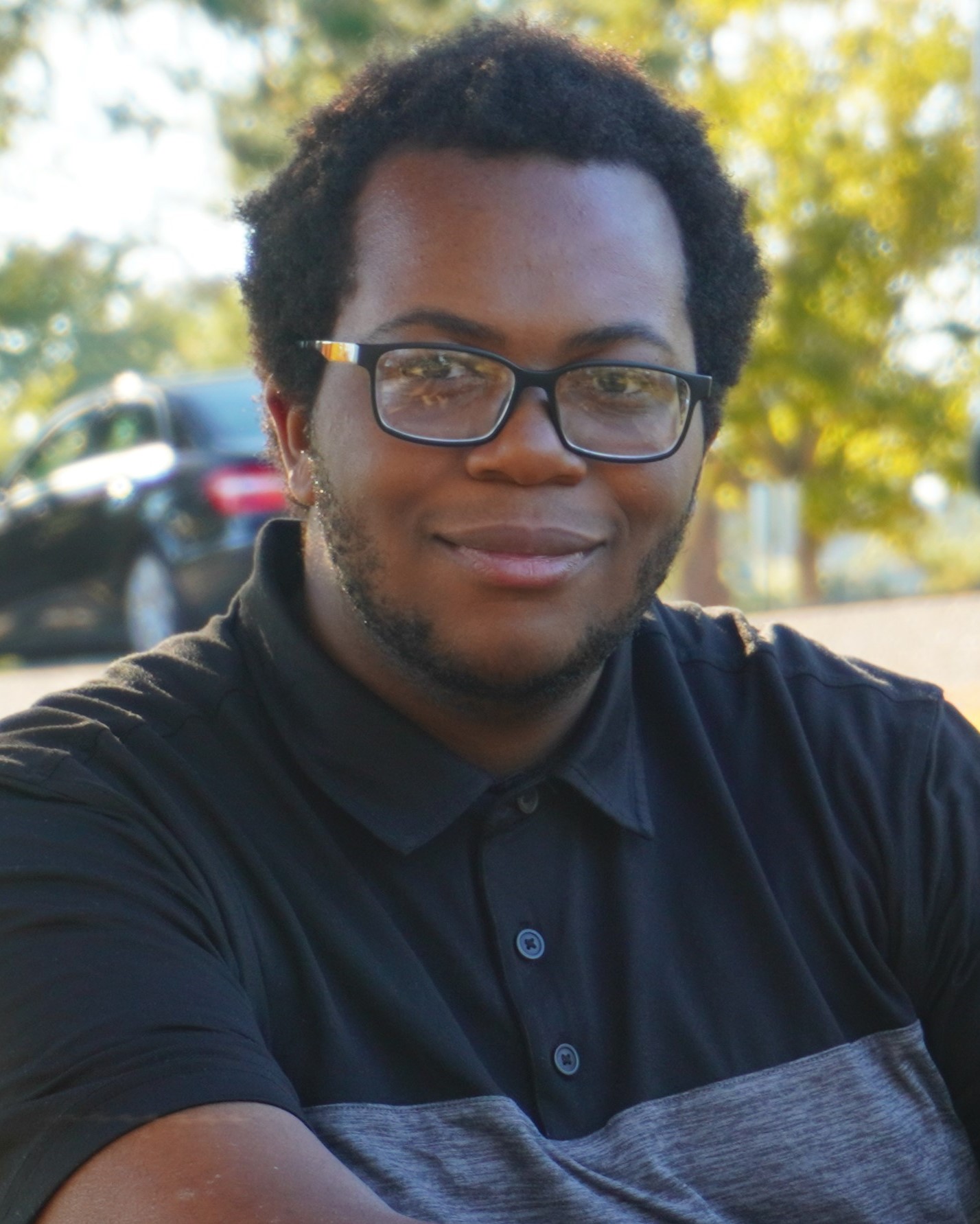}}]{Kyle Mooney}  (Student Member, IEEE) received his B.S. degree in computer engineering from the University of South Alabama in 2023 and is currently pursuing a Ph.D. in Electrical Engineering from the University of Alabama. His research interests include memory design, edge computing, and artificial intelligence.
\end{IEEEbiography}

\vspace{-.3in}

\begin{IEEEbiography}[{\includegraphics[width=1in,height=1.25in,clip]{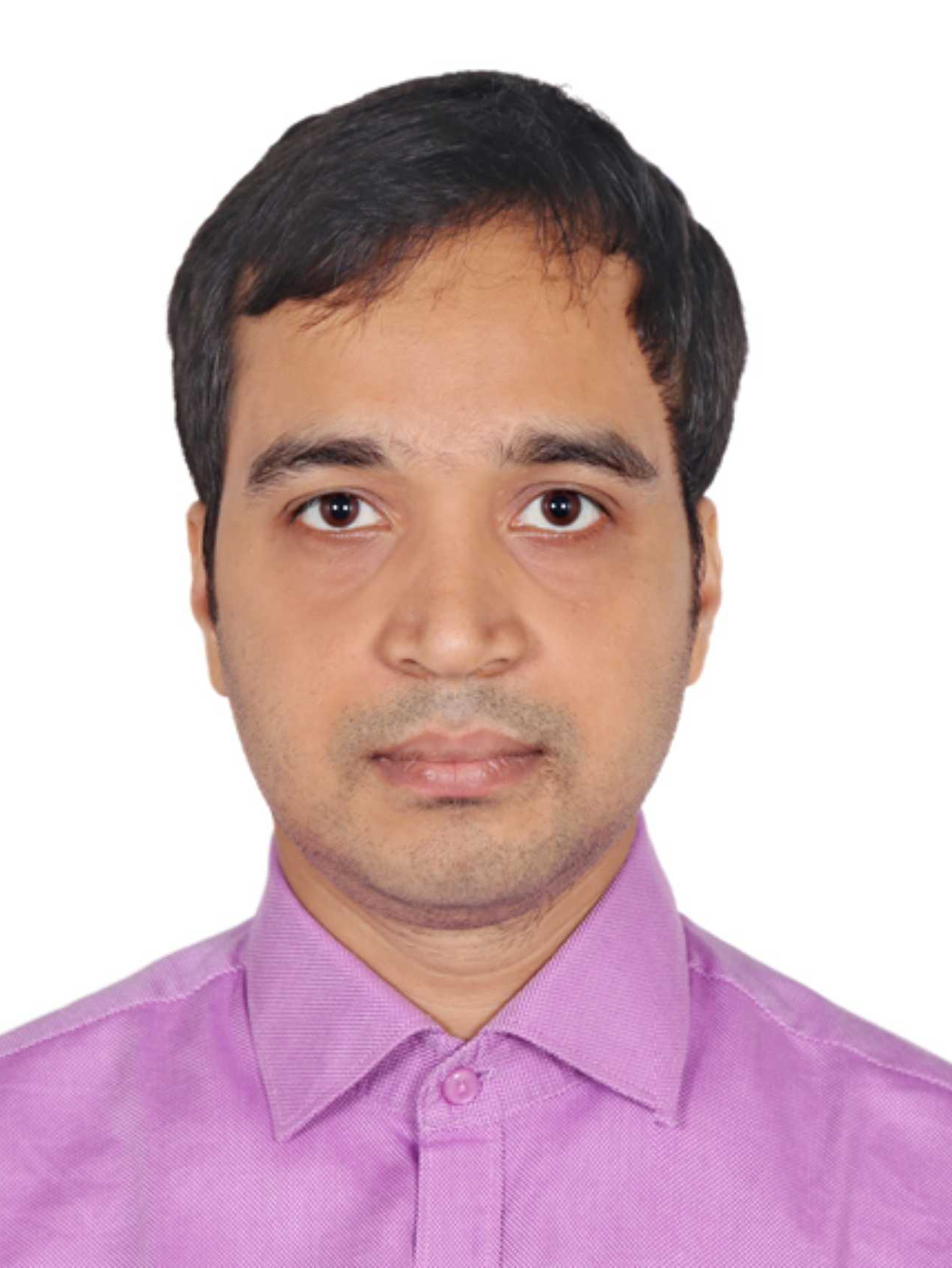}}]
{Md. Bipul Hossain} is pursuing the Ph.D. in Systems Engineering (Electrical and Computer Engineering Track) at the University of South Alabama. He has received the M.S. and B.S. degrees in Information and Communication Engineering from the Islamic University, Bangladesh . In addition, he served as a faculty member at Noakhali Science and Technology University during the period from Sept. 2018 until Dec. 2022. He is currently investigating and introducing novel deep learning model optimization algorithms for low-power low-complexity edge devices.  
\end{IEEEbiography}

\vspace{-.3in}

\begin{IEEEbiography}[{\includegraphics[width=1in,height=1.25in,clip]{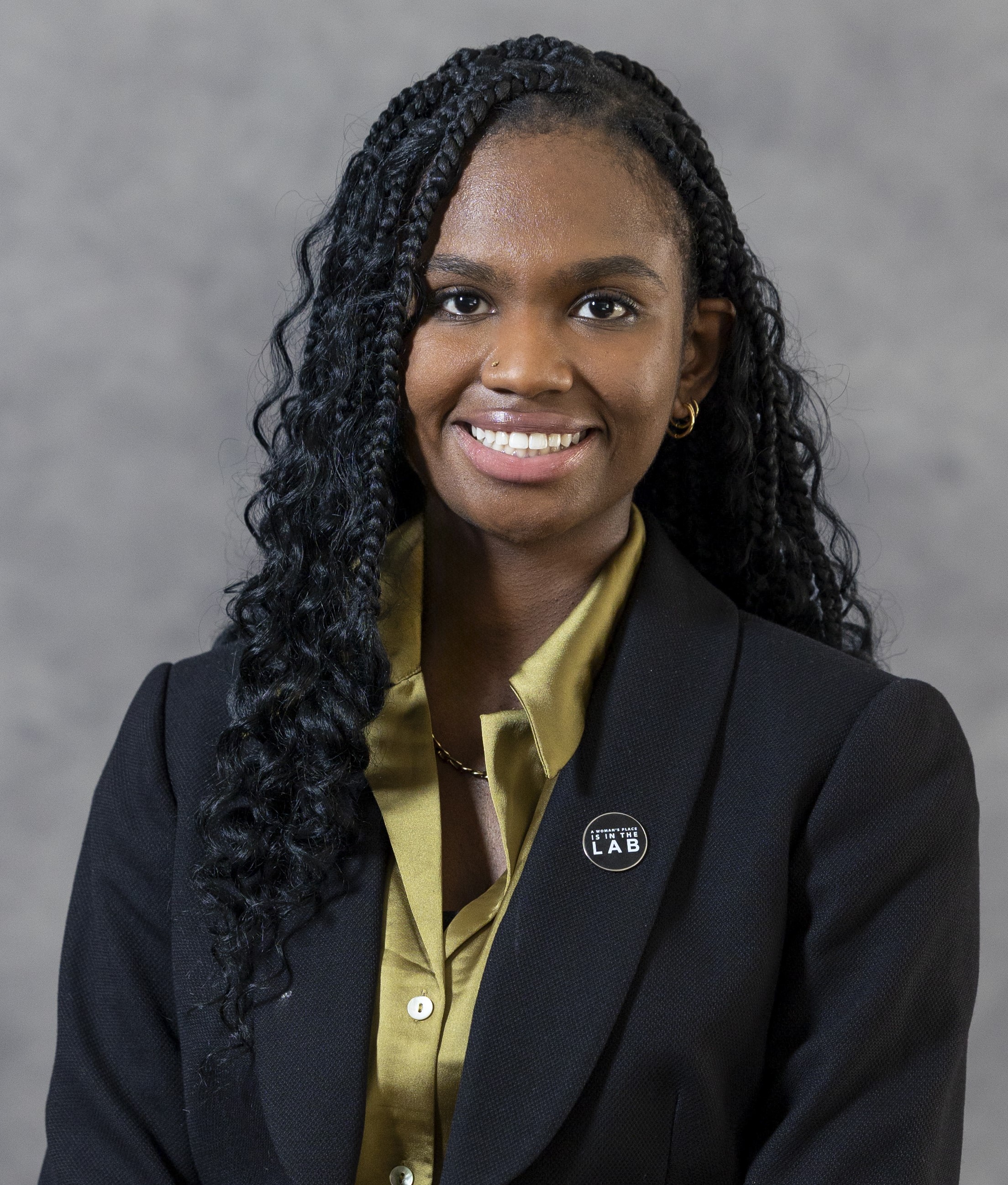}}]
{Destinie Diggs} is currently pursuing the MS degree in Electrical Engineering at the University of Alabama. She received her B.S. degree in computer engineering from the University of South Alabama in 2025. Her research interests include video systems and Edge AI.
\end{IEEEbiography}

\vspace{-.3in}

\begin{IEEEbiography}[{\includegraphics[width=1in,height=1.25in,clip]{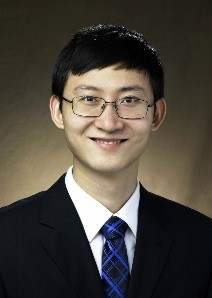}}]
{Yiwen Xu} received his
Ph.D. degree in systems and industrial engineering from the University of Arizona. He is currently an algorithmic trader and small business owner. His interests are machine learning, algorithmic trading, AI-implemented web design, and applied operations research.
\end{IEEEbiography}

\vspace{-.3in}

\begin{IEEEbiography}[{\includegraphics[width=1in,height=1.25in,clip]{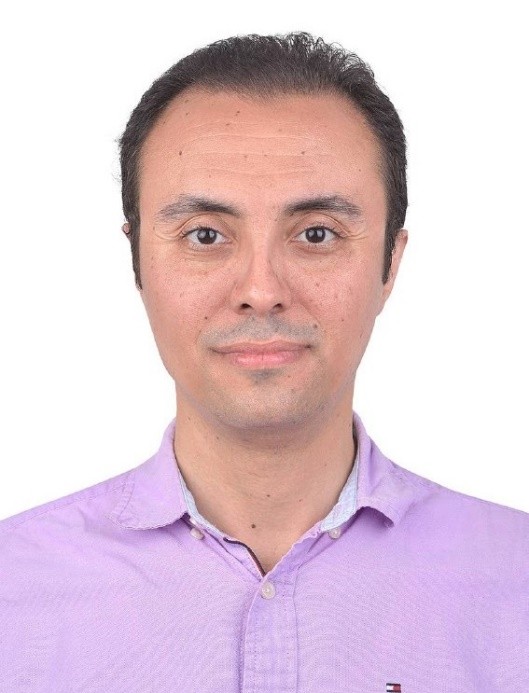}}]
{Mohamed Shaban} has been an assistant professor in the Electrical, and Computer Engineering department at the University of South Alabama since January 2020. He has received the Ph.D., and M.S. degrees in Computer Engineering from the University of Louisiana at Lafayette in 2016, and 2012 respectively. He has also received the M.S. degree in Electrical Communications Engineering, and the B.S. degree (Excellent with Honors) in Electronics, and Communications Engineering from Mansoura University, Egypt in 2010, and 2006 respectively. He has previously served as an Assistant Professor of Computer Science at Southern Arkansas University, Graduate Teaching Assistant at the University of Louisiana at Lafayette and an Assistant Lecturer at Mansoura University. Dr. Shaban is a senior member IEEE, and a senior member of the IEEE signal processing and IEEE engineering in medicine and biology societies. His current research interests are in the fields of Signal, and Image Processing for Biomedical Applications, Machine, and Deep Learning Applications, Edge Artificial Intelligence Optimization and Time-Series Analysis.
\end{IEEEbiography}

\vspace{-.3in}

\begin{IEEEbiography}[{\includegraphics[width=1\linewidth,height=1.25\linewidth,clip]{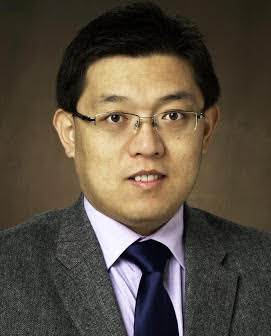}}]{Jinhui Wang} (Senior Member, IEEE) is currently a Full Professor and Larry Drummond Endowed Chair with the Department of Electrical and Computer Engineering at the University of Alabama, Tuscaloosa, AL, USA. His research interests include: (1) VLSI System, Digital and Mixed-Signal Integrated Circuit (IC) Design, 3D and 2.5D IC Design, and Emerging Memory; (2) AI Hardware Design, Post/Beyond CMOS Device, such as Memristors, Based Neuromorphic Computing System; and (3) Post/Beyond CMOS Devices Enabled Cybersecurity and Internet of Things (IoT) Systems. He has published over 200 refereed journal/conference papers and book chapters as well as 31 patents in the area of emerging semiconductor technologies. His previous work has received the Best Paper Award/Nomination at DATE 2021, ISVLSI 2019, ISLPED 2016, ISQED 2016, and EIT 2016.
\end{IEEEbiography}

\vspace{-.3in}

\begin{IEEEbiography}[{\includegraphics[width=1\linewidth,height=1.25\linewidth,clip]{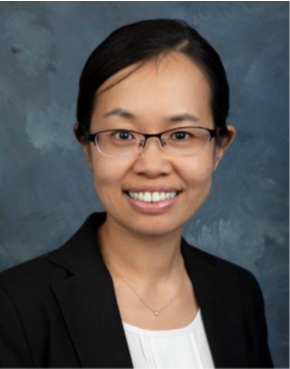}}]{Na Gong (M’13)}  received the Ph.D. degree in computer science and engineering from the State University of New York, Buffalo, in 2013. Currently, Dr. Gong is a professor in the Department of Electrical and Computer Engineering at the University of Alabama. Her research interests include power-efficient computing circuits and systems, memory optimization, AI hardware, and hardware privacy. She is the recipient of the best paper nomination from ISVLSI’19, best paper award from EIT’16, best paper nominations from ISQED’16 and ISLPED’16.
\end{IEEEbiography}


 

\vfill

\end{document}